  \pgfplotsset{compat=newest}
\newcommand{\R}{\mathbb{R}}
\newcommand{\bmat}[1]{\begin{bmatrix}#1\end{bmatrix}}
\newcommand\norm[1]{\left\lVert#1\right\rVert}
\newcommand{\bsmtx}{\left[ \begin{smallmatrix}} 
\newcommand{\esmtx}{\end{smallmatrix} \right]}
\newtheorem{theorem}{Theorem}
\newtheorem{definition}{Definition}
\newtheorem{remark}{Remark}
\newtheorem{lemma}{Lemma}
\newtheorem{assumption}{Assumption}
\newtheorem{property}{Property}
\title{\LARGE \bf
Stability Analysis 
using Quadratic Constraints for Systems with Neural Network Controllers 
}
\author{He Yin, Peter Seiler and  Murat Arcak \IEEEmembership{Fellow, IEEE}
\thanks{Funded in part by the Air Force Office of Scientific Research grant FA9550-18-1-0253, the Office of Naval Research grant N00014-18-1-2209, and  the U.S. National
	Science Foundation grant ECCS-1906164.}
\thanks{H. Yin and M. Arcak are with the University of California, Berkeley {\tt\small \{he\_yin, arcak\}@berkeley.edu.}}
\thanks{P. Seiler is with the University of Michigan,  Ann Arbor {\tt\small pseiler@umich.edu.}}
}
\begin{document}

\renewcommand*{\thefootnote}{\fnsymbol{footnote}}

\maketitle
\thispagestyle{empty}
\pagestyle{empty}
\setcounter{footnote}{1}

\begin{abstract}
  A method is presented to analyze the stability of feedback systems
  with neural network controllers. Two stability theorems are given to
  prove asymptotic stability and to compute an ellipsoidal inner-approximation
  to the region of attraction (ROA).  The first theorem addresses linear time-invariant systems, and
  merges Lyapunov theory with local (sector) quadratic constraints to
  bound the nonlinear activation functions in the
  neural network. The second theorem 
  allows the system
  to
  include perturbations such as unmodeled dynamics,
  slope-restricted nonlinearities, and time delay, using
  integral quadratic constraint (IQCs) to capture their
  input/output behavior. This in turn allows for off-by-one IQCs to refine the description of activation functions by capturing their slope restrictions. Both results rely
  on semidefinite programming to approximate the ROA.  The  method is
  illustrated on systems with neural networks trained to stabilize a
  nonlinear inverted pendulum as well as vehicle lateral dynamics with
  actuator uncertainty.
\end{abstract}


\section{Introduction}

The paradigm of stabilizing dynamical systems with Neural Network (NN) controllers \cite{Narendra93} has been revived following recent
development in deep NNs, e.g. policy gradient
\cite{Williams1992,Peters2008, Schulman2015, Kakade01} and behavioral cloning \cite{ALVINN1989}. 
However, feedback systems with
NN controllers suffer from lack of stability and safety certificates due to the complexity of the NN
structure. Specifically, NNs have various types of nonlinear
activation functions, potentially numerous layers, and a large number of
hidden neurons, making it difficult to apply classical
analysis methods, e.g. Lyapunov theory \cite{Khalil:2002}. Monte-Carlo simulations can be used to
investigate stability but lack formal guarantees, which
are important in safety-critical applications.


Several works propose  using quadratic constraints
(QCs) to bound the nonlinear activation
functions.  This approach is used to outer-bound the outputs of a
(static) NN given a set of inputs in \cite{2019Fazlyab} and
upper-bound the Lipschitz constant of NNs in
\cite{FazlyabLip,Pauli2020}. The work \cite{Haimin2020} uses this
 idea for finite-time reachability analysis of a
 system with a NN controller. 
The work \cite{Lavaei2018}
performs stability analysis by constructing QCs from the bounds of partial
gradients of NN controllers. Reference \cite{2018Kim} assesses global asymptotic stability
of dynamic neural network models using
QCs and Lyapunov theory.

This paper presents two main stability results for  a
feedback system with a NN controller.  
Theorem~\ref{thm:NominalLyap} provides a condition to prove 
stability and to inner-approximate the region of attraction (ROA) for a 
linear time-invariant (LTI) plant.  It uses Lyapunov theory, and local
(sector) QCs to bound the nonlinear activation
functions in the NN. Theorem~\ref{thm:RobustLyap} allows the plant to include
perturbations such as unmodeled dynamics, slope-restricted
nonlinearities, and time delay,  characterizing them with integral quadratic constraints
(IQCs) \cite{Megretski:97,Veenman:16}. This in turn allows for the use of off-by-one IQCs \cite{Lessard2015} to capture the slope restrictions of activation functions.  Both results rely on
semidefinite programming to approximate the ROA.

The specific contributions of this paper are three-fold. First, our
nominal analysis with LTI plants and NN controllers uses \textit{offset} local sector
QCs that are centered around the equilibrium inputs to the NN
activation functions and allow for analyzing stability around a
non-zero equilibrium point. 
Second, our analysis
of uncertain plants and NN controllers provides robustness
guarantees for the feedback system. The uncertain plant is modeled as
an interconnection of the nominal plant and perturbations that are
described by IQCs. The use of IQCs also allows for
plants that are not necessarily LTI. Third, the proposed framework allows for local (dynamic) off-by-one IQCs to further sharpen the description of activation functions by capturing their slope restrictions. This differs from \cite{2019Fazlyab, FazlyabLip,Pauli2020,Haimin2020,Lavaei2018}, which derive only static QCs for activation functions.

Local (static) sector IQCs have been used in the stability analysis of linear systems with actuator saturation \cite{FANG_saturation_08, Tarbouriech2014}, and unbounded nonlinearities \cite{summersCDC}. The description of these nonlinearties are refined by incorporating soft (dynamic) IQCs in the stability analysis framework for linear systems \cite{Fetzer2018_softIQC}, and polynomial systems \cite{IANNELLI2019}. Compared with these works, this work is specialized to NN-controlled systems: it exploits the specific properties of NNs and uses the Interval Bound Propagation method \cite{Gowal2018} to derive non-conservative static and dynamic local IQCs to describe NN controllers; and it also allows for the analysis of NN-controlled nonlinear systems by accommodating perturbations.

The paper is organized as follows. Section \ref{sec:nomanaly} presents
the problem formulation and the nominal stability analysis when the
plant is LTI. Section \ref{sec:robustanaly} addresses
uncertain systems using  IQCs. Section
\ref{sec:example} provides numerical examples, including a
nonlinear inverted pendulum and an uncertain vehicle model. 


\emph{Notation:}  $\mathbb{S}^n$ denotes the set of $n$-by-$n$
symmetric matrices. $\mathbb{S}_+^n$ and $\mathbb{S}_{++}^n$ denote
the sets of $n$-by-$n$ symmetric, positive semidefinite and positive
definite matrices, respectively.  $\mathbb{RL}_{\infty}$ is the set of
rational functions with real coefficients and no poles on the
unit circle.  $\mathbb{RH}_{\infty} \subset \mathbb{RL}_{\infty}$
contains functions that are analytic in the closed exterior of the
unit disk in the complex plane.  $\ell^{n_x}_2$ is the set of
sequences $x: \mathbb{N} \rightarrow \R^{n_x}$ with
$\norm{x}_2 := \sqrt{\sum_{k=0}^\infty x(k)^\top x(k)} < \infty$.  When applied to vectors, the orders $>, \leq$ are applied elementwise.  For $P \in \mathbb{S}_{++}^{n}$,
$x_* \in \R^{n}$, define the ellipsoid
\begin{align} \label{eq:epsil_def}
\mathcal{E}(P,x_*):= \{x \in \R^n : (x-x_*)^\top P (x-x_*) \leq
1\}. 
\end{align}


\section{Nominal Stability Analysis} \label{sec:nomanaly}

\subsection{Problem Formulation}

Consider the feedback system consisting of a plant $G$ and
state-feedback controller $\pi$ as shown in
Figure~\ref{fig:NominalFeedback}. As a first step, we assume the plant $G$ is a linear,
time-invariant (LTI) system defined by the following discrete-time model:
\begin{align}
\label{eq:NominalSys}
x(k+1) &= A_G\ x(k) + B_{G} \ u(k),
\end{align} 
where $x(k) \in \R^{n_G}$ is the state, $u(k) \in \R^{n_u}$ is the
 input,
$A_G\in \R^{n_G \times n_G}$ and $B_G\in \R^{n_G \times n_u}$. The
controller $\pi: \R^{n_G} \rightarrow \R^{n_u}$ is an $\ell$-layer,
feed-forward neural network (NN) defined as:
\begin{subequations}\label{eq:NNlong}
\begin{align}
w^0(k) &= x(k), \\
\label{eq:NNlong_wi}
w^i(k) &= \phi^i\left(\ W^i w^{i-1}(k) + b^i \ \right), 
           \ i = 1, \ldots , \ell, \\
u(k) &= W^{\ell+1} w^{\ell}(k) + b^{\ell+1},
\end{align}
\end{subequations}
where $w^i \in \R^{n_i}$ are the outputs (activations) from the
$i^{th}$ layer and $n_0 = n_G$.  The operations for each layer are defined by a weight
matrix $W^i \in \R^{n_i \times n_{i-1}}$, bias vector
$b^i \in \R^{n_i}$, and activation function
$\phi^{i}: \R^{n_i} \rightarrow \R^{n_i}$. The activation function
$\phi^i$ is applied element-wise, i.e.
\begin{align}
\phi^i(v) := \bmat{\varphi(v_1), \cdots, \varphi(v_{n_i})}^\top,
\end{align}
where $\varphi: \R \rightarrow \R$ is the (scalar) activation function
selected for the NN. Common choices for the scalar
activation function include $\varphi(\nu):=\tanh(\nu)$,
sigmoid $\varphi(\nu):=\frac{1}{1+e^{-\nu}}$, ReLU
$\varphi(\nu):=\max(0,\nu)$, and leaky ReLU
$\varphi(\nu):=\max(a\nu,\nu)$ with $a\in (0,1)$. We assume the  activation $\varphi$ is identical in all layers; this can be relaxed with minor changes to the notation.
\vspace{-0.2in}
\begin{figure}[h]
  \centering
  \includegraphics[width=0.4\textwidth]{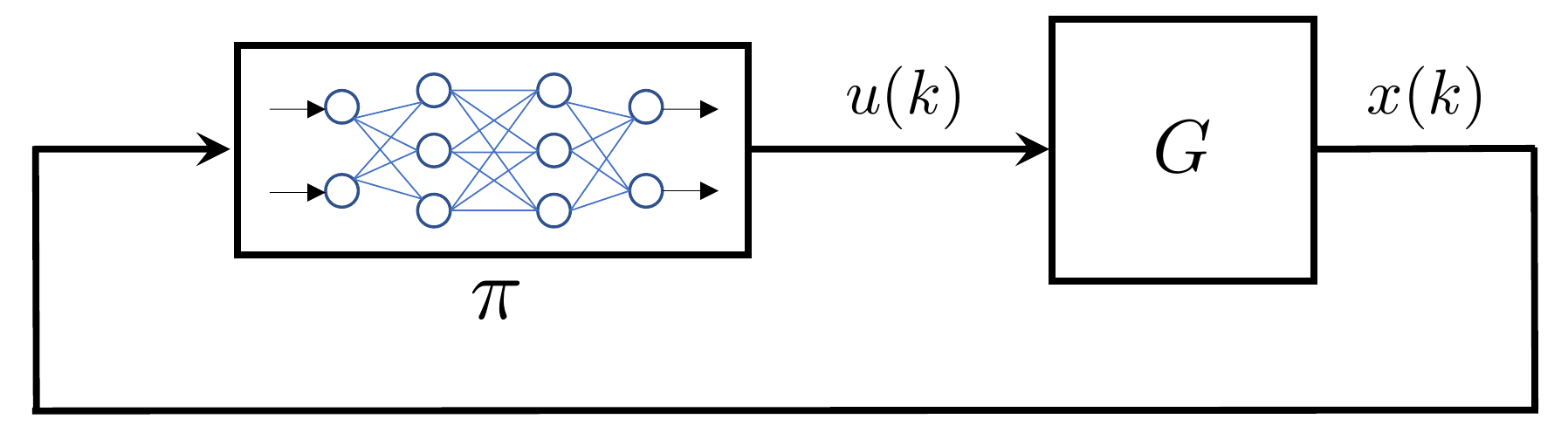}
  \caption{Feedback system with plant $G$ and NN $\pi$}
  \label{fig:NominalFeedback}    
\end{figure}
\vspace{-0.1in}

The state vector $x_*$ is an equilibrium point of the
feedback system with input $u_*$ if the following conditions hold:
\begin{subequations}\label{eq:EqPtxu}
\begin{align}
    x_* & = A_G \ x_* + B_G \ u_*, \\
    u_* & = \pi(x_*).
\end{align}
\end{subequations}
Let $\chi(k;x_0)$ denote the solution to the feedback system at time $k$
from initial condition $x(0)=x_0$. Our goal is to analyze asymptotic stability of the equilibrium point and to 
find the largest estimate of the region of attraction, defined below,  using an
ellipsoidal inner approximation.

\begin{definition}
  \label{def:ROA}
  The region of attraction (ROA) of the feedback system with plant $G$ and NN $\pi$ is defined
  as
  \begin{align}
    \mathcal{R} := \{x_0 \in \R^{n_G}: \lim_{k \rightarrow \infty} \chi(k;x_0) = x_*\}.
  \end{align}
\end{definition}


\subsection{NN Representation: Isolation of Nonlinearities}

It is useful to isolate the nonlinear activation functions from the
linear operations of the NN as done in \cite{2019Fazlyab,2018Kim}.
Define $v^i$ as the input to the activation function $\phi^i$:
\begin{align}
\label{eq:PhiInput}
  v^i(k) := W^i w^{i-1}(k) + b^i,
   \ i = 1, \ldots , \ell.
\end{align}
The nonlinear operation of the $i^{th}$ layer
(\ref{eq:NNlong_wi}) is thus expressed as
$w^i(k) = \phi^i( v^i(k) )$. Gather the inputs and
outputs of all activation functions:
\begin{align}
  v_\phi := \bmat{v^1 \\ \vdots \\ v^{\ell}} \in \R^{n_\phi} 
\mbox{ and }
  w_\phi := \bmat{w^1 \\ \vdots \\ w^{\ell}} \in \R^{n_\phi},
\end{align}
where $n_\phi:=n_1+\cdots + n_\ell$, and define the combined
nonlinearity $\phi: \R^{n_\phi} \rightarrow \R^{n_\phi}$ by stacking
the activation functions:  
\begin{align}
    \phi(v_\phi) := \bmat{\phi^1(v^1) \\ \vdots \\ \phi^\ell(v^\ell)}. 
\label{eq:phi_def}
\end{align}
Thus
$w_\phi(k) = \phi( v_\phi(k) )$, where the scalar activation function
$\varphi$ is applied element-wise to each entry of $v_\phi$.
Finally, the NN control policy $\pi$ defined in \eqref{eq:NNlong} can be rewritten as:
\begin{subequations}\label{eq:NNlft}
\begin{align}
  \bmat{u(k)\\ v_\phi(k)} & = N  \bmat{x(k) \\ w_\phi(k) \\ 1} \\
  w_\phi(k) & = \phi( v_\phi(k) ).
\end{align}  
\end{subequations}
The matrix $N$ depends on the weights and biases as follows, where the vertical and horizontal bars  partition $N$ 
compatibly with the inputs $(x,w_\phi,1)$ and outputs $(u,v_\phi)$:
\begin{subequations}\label{eq:NNmatrix}
\begin{align}
N & := \left[ \begin{array}{c|cccc|c} 
  0 & 0 &  0 & \cdots & W^{\ell+1} &  b^{\ell+1} \\ \hline
  W^1 & 0   & \cdots & 0 & 0 & b^1  \\ 
  0   & W^2 & \cdots & 0 & 0 & b^2 \\
  \vdots & \vdots & \ddots & \vdots & \vdots & \vdots \\
  0   & 0   & \cdots & W^\ell & 0 & b^\ell 
 \end{array}\right] \\
 & := \bmat{ N_{ux} & N_{uw} & N_{ub} \\ N_{vx} & N_{vw} & N_{vb} }.
\end{align}
\end{subequations}
This decomposition of the NN, depicted in
Figure~\ref{fig:NNLFT},
 isolates the activation functions
in preparation for the stability analysis.

\begin{figure}[h]
  \centering
  \includegraphics[width=0.3\textwidth]{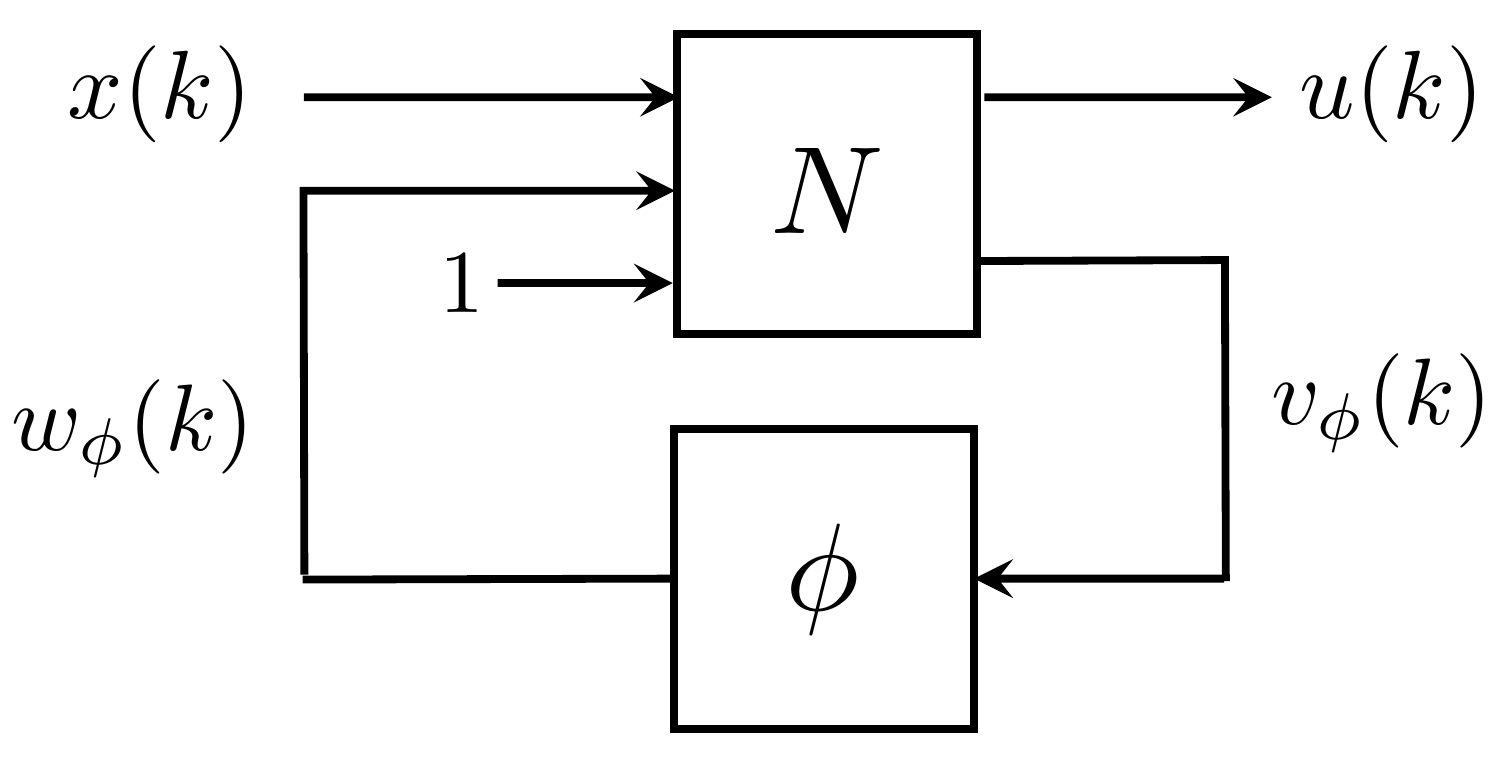}
  \caption{NN representation to isolate the nonlinearities $\phi$.}
  \label{fig:NNLFT}    
\end{figure}

Suppose $(x_*,u_*)$  satisfies \eqref{eq:EqPtxu}.  Then 
$x_*$ can be propagated through the NN to obtain equilibrium values
$v_*^i$, $w_*^i$ for the inputs/outputs of each activation function
($i=1,\ldots, \ell$), yielding 
$(v_\phi,w_\phi)=(v_{*},w_{*})$.  Thus $(x_*, u_*, v_{*},w_{*})$ is an
equilibrium point of \eqref{eq:NominalSys} and \eqref{eq:NNlong} if:
\begin{subequations}\label{eq:EqPtxuvw}
\begingroup
\allowdisplaybreaks
  \begin{align} 
    x_* &= A_G \ x_* + B_G \ u_*, \\
    \bmat{u_*\\v_{*}} &= N \bmat{x_* \\ w_{*} \\ 1}, \label{eq:u*v*_def}\\
    w_{*} &= \phi(v_{*}).\label{eq:w*_def}
  \end{align}
  \endgroup
\end{subequations}

\subsection{Quadratic Constraints: Scalar Activation Functions}

The stability analysis relies on quadratic constraints (QCs) to bound 
the activation function.  
A typical
constraint is the sector bound as defined next.
\begin{definition}
  \label{def:GlobalSector}
  Let $\alpha \le \beta$ be given.  The function
  $\varphi: \R \rightarrow \R$ lies in the (global) sector
  $[\alpha,\beta]$ if:
  \begin{align}
    ( \varphi(\nu) - \alpha \nu ) \cdot
       (\beta \nu - \varphi(\nu)) \ge 0
    \,\,\, \forall \nu \in \R.
  \end{align}
\end{definition}
The interpretation of the sector $[\alpha,\beta]$ is  that
$\varphi$ lies between lines passing through the origin with slope
$\alpha$ and $\beta$.  Many activation functions are bounded in the
sector $[0,1]$, e.g. $\tanh$ and ReLU. 
Figure~\ref{fig:tanh} illustrates $\varphi(\nu) = \tanh(\nu)$ (blue solid)
and the global sector defined by $[0,1]$ (red solid lines).
\vspace{-0.1in}
\begin{figure}[h]
  \centering
  \includegraphics[width=0.42\textwidth]{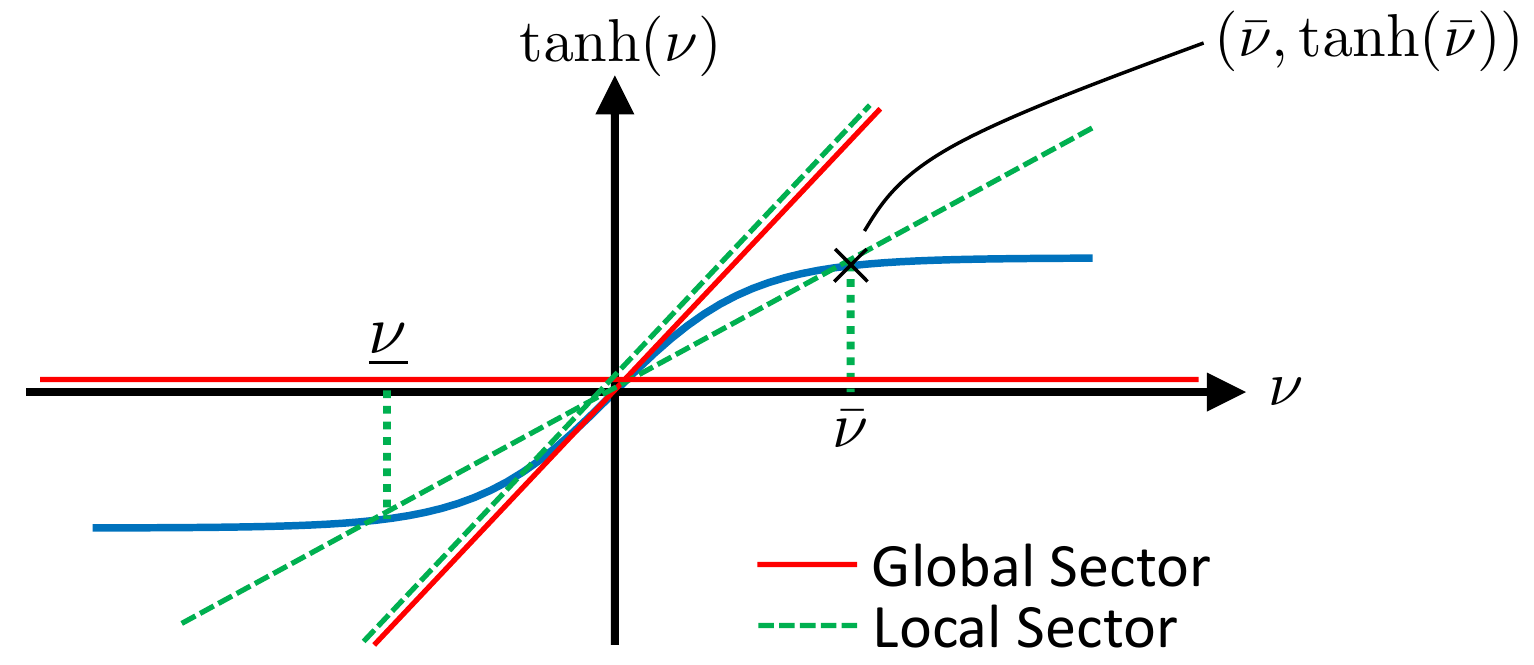}
  \caption{Sector constraints on $\tanh$}
  \label{fig:tanh}    
\end{figure}

The global sector constraint is often too coarse for
stability analysis, and a local sector constraint  provides
tighter bounds.

\begin{definition}
  \label{def:LocalSector}
  Let $\alpha$, $\beta$, $\underline \nu$, $\bar \nu \in\R$ 
  with $\alpha \le \beta$ and $\underline \nu \le 0 \le \bar \nu$.
  The function $\varphi: \R \rightarrow \R$ satisfies the local sector
  $[\alpha,\beta]$ if
  \begin{align}
    ( \varphi(\nu) - \alpha \, \nu ) \cdot
     (\beta \, \nu - \varphi(\nu)) \ge 0
    \,\,\, \forall \nu \in [\underline{\nu},\bar{\nu} ].
  \end{align}
\end{definition}

As an example,  $\varphi(\nu):=\tanh(\nu)$ restricted to the
interval $[-\bar{\nu},\bar{\nu}]$  satisfies the local
sector bound $[\alpha,\beta]$ with
$\alpha:=\tanh(\bar\nu)/\bar{\nu}>0$ and
$\beta:=1$. As shown in Figure~\ref{fig:tanh}  (green
dashed lines), the local sector provides a tighter bound than the global sector.  These bounds are
valid for a symmetric interval around the origin with
$\underline\nu = -\bar\nu$; non-symmetric intervals ($\underline\nu \ne -\bar \nu$) can be handled similarly.

The local and global sector constraints above were defined to be
centered at the point $(\nu,\varphi(\nu))=(0,0)$. The stability
analysis will require offset sectors centered around an arbitrary
point $(\nu_*,\varphi(\nu_*))$ on the function.  For example,  $\varphi(\nu)=\tanh(\nu)$ satisfies the global
sector bound (red solid) around the point $(\nu_*,\tanh(\nu_*))$ with
$[\alpha, \beta]=[0,1]$, as shown in
Figure~\ref{fig:tanhshifted}. It satisfies a tighter
local sector bound (green dashed) when the input is restricted to
$\nu \in[\underline{\nu},\bar{\nu}]$.  An explicit expression for this
local sector is $\beta=1$ and
\begin{align*}
\alpha := \min\left( \frac{\tanh(\bar{\nu})-\tanh(\nu_*)}{\bar{\nu} -
  \nu_*}, \frac{\tanh(\nu_*)-\tanh(\underline{\nu})}{\nu_* -
  \underline{\nu}} \right).
\end{align*}
The local sector upper bound $\beta$ can be tightened further.  This
leads to the following definition of an offset local sector.

\begin{figure}[h]
  \centering
  \includegraphics[width=0.42\textwidth]{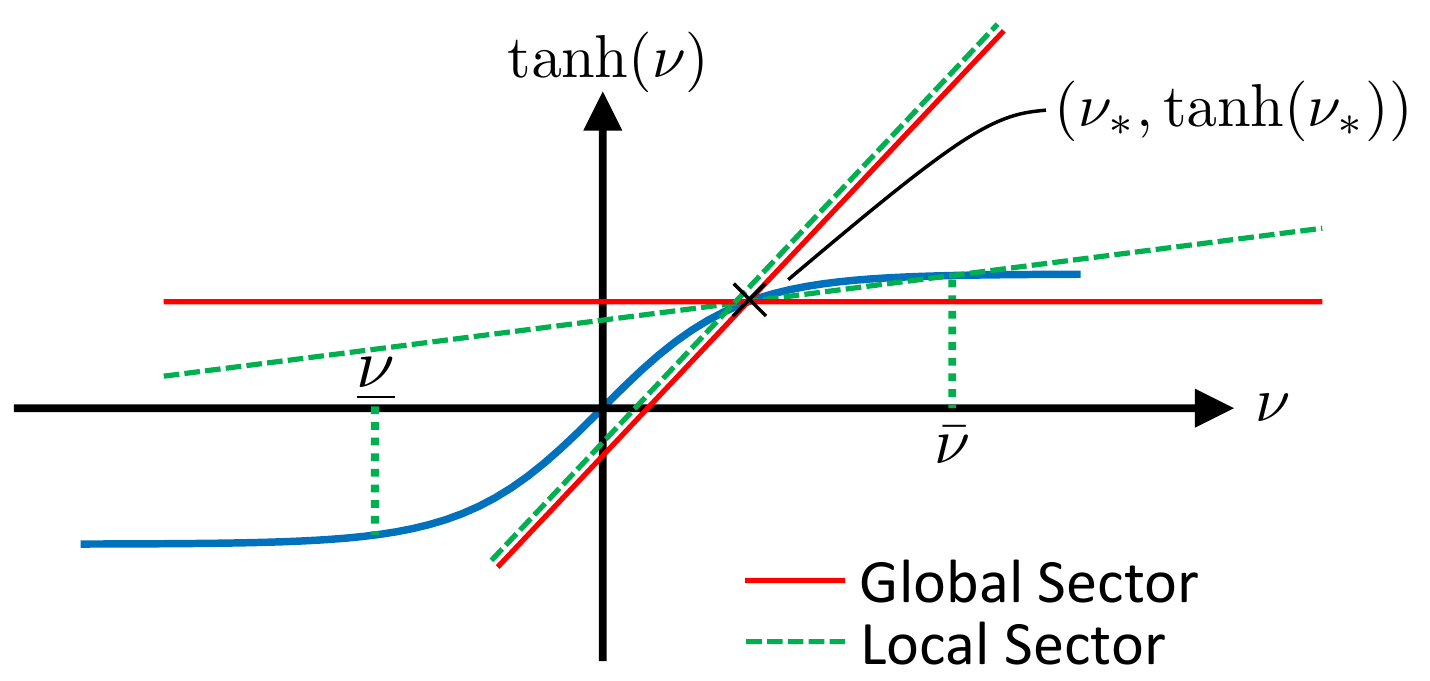}
  \caption{Offset local sector constraint on $\tanh$}
  \label{fig:tanhshifted}    
\end{figure}

\begin{definition}
  \label{def:LocalSectorOffset}
  Let $\alpha$, $\beta$, $\underline \nu$, $\bar \nu$, $\nu_*\in\R$ be
  given with $\alpha \le \beta$ and
  $\underline \nu \le \nu_* \le \bar \nu$.  The function
  $\varphi: \R \rightarrow \R$ satisfies the offset local sector
  $[\alpha,\beta]$ around the point $(\nu_*, \varphi(\nu_*))$ if:
  \begin{align}
    ( \Delta \varphi(\nu) - \alpha \, \Delta \nu ) \cdot
     (\beta \, \Delta \nu - \Delta \varphi(\nu)) \ge 0
    \,\,\, \forall \nu \in [\underline{\nu},\bar{\nu} ]
  \end{align}
  where $\Delta \varphi(\nu): = \varphi(\nu) - \varphi(\nu_*)$
  and $\Delta \nu : = \nu - \nu_*$.
\end{definition}

\vspace{-0.3in}
\subsection{Quadratic Constraints: Combined Activation Functions}

Offset local sector constraints can also be defined for the combined
nonlinearity $\phi$, given by \eqref{eq:phi_def}.  Let
$\underline{v}, \bar{v}, v_* \in \R^{n_\phi}$ be given with
$\underline{v} \le v_* \le \bar{v}$.  Assume that the activation input
$v_\phi \in \R^{n_\phi}$ lies, element-wise, in the interval
$[\underline{v},\bar{v}]$ and the $i^{th}$ input/output pair is
$w_{\phi,i} = \varphi( v_{\phi,i} )$.  Further assume the scalar activation function
satisfies the local sector $[\alpha_i,\beta_i]$ around the point
$v_{*,i}$ with the input restricted to
$v_{\phi,i} \in [\underline{v}_i,\bar{v}_i]$ for $i=1,\ldots,n_\phi$.  The
local sector bounds can be computed for $\varphi$ on the given
interval either analytically (as above for $\tanh$) or
numerically. These local sectors can  be stacked into vectors
$\alpha_\phi, \beta_\phi \in \R^{n_\phi}$ that provide 
QCs satisfied by the
combined nonlinearity $\phi$.
 
\begin{lemma} \label{lemma:sectorQC} Let $\alpha_\phi$, $\beta_\phi$,
  $\underline v$, $\bar v$, $v_* \in\R^{n_\phi}$ be given with
  $\alpha_\phi \le \beta_\phi$, $\underline v \le v_* \le \bar v$, and
  $w_*:=\phi(v_*)$.  Assume $\phi$ satisfies the offset local sector
  $[\alpha_\phi,\beta_\phi]$ around the point $(v_*, w_*)$ element-wise for
  all $v_\phi \in [\underline v,\bar v]$. If $\lambda \in \R^{n_\phi}$
  with $\lambda \ge 0$ then:
  \begin{align}
    & \bmat{v_\phi-v_* \\ w_\phi-w_*}^\top 
    \Psi_\phi^\top M_\phi(\lambda) \Psi_\phi
    \bmat{v_\phi-v_* \\ w_\phi-w_*} \ge 0 \nonumber\\
    & \hspace{0.4in} \forall v_\phi \in [\underline v,\bar v], 
    \, w_\phi = \phi( v_\phi ), \nonumber \\
    \text{where} \ \ \Psi_{\phi} & := \bmat{ \text{diag}(\beta_\phi) & -I_{n_\phi}  \\
     -\text{diag}(\alpha_\phi)  & I_{n_{\phi}}}  \label{eq:Psiphi0} \\
     M_\phi(\lambda) & := \bmat{0_{n_\phi} & \text{diag}(\lambda) \\ 
             \text{diag}(\lambda) & 0_{n_\phi}}. \label{eq:Mphi0}
   \end{align}
\end{lemma}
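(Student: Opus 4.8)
The plan is to recognize the left-hand side as a nonnegatively-weighted sum of the $n_\phi$ scalar offset local sector inequalities guaranteed by the hypothesis, and therefore as a sum of nonnegative terms. There is no deep idea here; the proof is pure bookkeeping with the block structures of $\Psi_\phi$ and $M_\phi(\lambda)$.

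First I would set $\tilde v := v_\phi - v_*$, $\tilde w := w_\phi - w_*$ and compute $\Psi_\phi \bmat{\tilde v \\ \tilde w}$ using \eqref{eq:Psiphi0}; this equals $\bmat{p \\ q}$ with $p := \mathrm{diag}(\beta_\phi)\,\tilde v - \tilde w$ and $q := \tilde w - \mathrm{diag}(\alpha_\phi)\,\tilde v$. Writing $\nu_i := v_{\phi,i}$ so that $\Delta\nu_i = \tilde v_i$, and using that $w_* = \phi(v_*)$ is applied element-wise so $w_{*,i} = \varphi(v_{*,i})$ and hence $\Delta\varphi(\nu_i) = w_{\phi,i} - w_{*,i} = \tilde w_i$, the components are $p_i = \beta_{\phi,i}\,\Delta\nu_i - \Delta\varphi(\nu_i)$ and $q_i = \Delta\varphi(\nu_i) - \alpha_{\phi,i}\,\Delta\nu_i$ — precisely the two factors in Definition~\ref{def:LocalSectorOffset}.

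Next I would evaluate the middle matrix: since $M_\phi(\lambda)$ in \eqref{eq:Mphi0} has only the off-diagonal blocks $\mathrm{diag}(\lambda)$, the quadratic form collapses to $\bmat{p \\ q}^\top M_\phi(\lambda) \bmat{p \\ q} = 2\,p^\top \mathrm{diag}(\lambda)\, q = 2\sum_{i=1}^{n_\phi} \lambda_i\, p_i q_i$, where the factor $2$ comes from $p^\top \mathrm{diag}(\lambda) q$ being a scalar equal to its transpose $q^\top \mathrm{diag}(\lambda) p$. Combining with the previous step, $p_i q_i = (\beta_{\phi,i}\,\Delta\nu_i - \Delta\varphi(\nu_i))(\Delta\varphi(\nu_i) - \alpha_{\phi,i}\,\Delta\nu_i)$.

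Finally I would invoke the hypotheses: for any $v_\phi \in [\underline v,\bar v]$ each entry $v_{\phi,i}$ lies in $[\underline v_i,\bar v_i]$, so the assumed offset local sector bound on $\varphi$ around $v_{*,i}$ gives $p_i q_i \ge 0$; with $\lambda_i \ge 0$ every term of the sum is nonnegative, proving the inequality. I expect the only place requiring care is matching the rows of $\Psi_\phi$ to the correct factor of each scalar sector product and confirming the offset is propagated correctly, i.e. that $\tilde w_i$ is exactly $\Delta\varphi(\nu_i)$; everything else is the routine quadratic-form expansion sketched above.
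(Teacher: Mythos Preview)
Your proposal is correct and follows essentially the same approach as the paper's proof: both expand the quadratic form into a $\lambda_i$-weighted sum of the scalar offset sector products and invoke the hypothesis termwise. (You even track the factor of $2$ from the symmetric off-diagonal blocks of $M_\phi(\lambda)$, which the paper's one-line expansion omits; this is immaterial to the sign argument.)
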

\begin{proof}
  For any $v_\phi \in \R^{n_\phi}$ and $w_\phi = \phi( v_\phi )$:
  \begingroup
\allowdisplaybreaks
  \begin{align*}
    & \bmat{v_\phi-v_* \\ w_\phi-w_*}^\top 
    \Psi_\phi^\top M_\phi(\lambda) \Psi_\phi
    \bmat{v_\phi-v_* \\ w_\phi-w_*} \\
    & \,\,\,\,
      = \sum_{i=1}^{n_\phi} \lambda_i 
      ( \Delta w_i - \alpha_i \, \Delta v_i ) \cdot
      (\beta_i \, \Delta v_i - \Delta w_i) 
  \end{align*}
  \endgroup
  where $\Delta w_i: = \varphi(v_{\phi,i}) - \varphi(v_{*,i})$ and
  $\Delta v_i : = v_{\phi,i} - v_{*,i}$.  If $v_\phi \in [\underline v,\bar v]$
  then each term in the sum is non-negative
  by the offset local sector constraints and $\lambda \ge 0$.
\end{proof}

In order to
apply the local sector and slope bounds in the stability analysis, we must first compute the bounds $\underline{v}, \overline{v} \in \R^{n_\phi}$ on the activation input $v_\phi$. The process to compute the bounds is briefly discussed here with more details provided in \cite{Gowal2018}.
Let
$v_*^1$ be the equilibrium value at the first NN layer.  Select
$\underline v^1$, $\bar v^1\in\R^{n_1}$ with
$\underline v^1 \le v_*^1 \le \bar v^1$.  
The assumed bounds on $v^1$ can be used
to compute an interval $[\underline{w}^1,\bar{w}^1]$ for the
output $w^1=\phi^1(v^1)$\footnote{For example, if
  $\varphi(\nu) = \tanh(\nu)$ then the input bound
  $\nu \in [-\bar\nu,\bar \nu]$ implies the output bound
  $\varphi(\nu) \in [-\tanh(\bar\nu),\tanh(\bar\nu)]$.} 
which can then be used to compute bounds
$[\underline{v}^2,\bar{v}^2]$ on the input $v^2$ to the next
activation function.\footnote{The next activation input is
  $v^2 := W^2 w^1 + b^2$. The largest value of the $i^{th}$ entry of
  this vector is obtained by solving the following optimization:
  \begin{align}
    \bar{v}^2_i := 
     \max_{\underline{w}^1 \le w^1 \le \bar{w}^1} y^\top w^1 + b_i^2
  \end{align}
  where $y^\top$ is the $i^{th}$ row of $W^2$. Define
  $c:=\frac{1}{2} (\bar{w}^1 + \underline{w}^1)$ and
  $r:=\frac{1}{2} (\bar{w}^1 - \underline{w}^1)$. The optimization can
  be rewritten as:
  \begin{align}
    \bar{v}^2_i := \left( y^\top c + b_i^2\right) + 
     \max_{-r \le \delta \le r} y^\top \delta
  \end{align}
  This has the explicit solution
  $\bar{v}^2_i = y^\top c + b_i^2 + \sum_{j=1}^{n_1} | y_j r_j
  |$. Similarly, the minimal value is
  $\underline{v}^2_i = y^\top c + b_i^2 - \sum_{j=1}^{n_1} | y_j r_j
  |$.}  The intervals computed for $w^1$ and $v^2$ will contain their
equilibrium value $w_*^1$ and $v_*^2$.  This process can be propagated
through all layers of the NN to obtain the bounds $\underline{v}, \overline{v} \in \R^{n_\phi}$ for the activation function input $v_\phi$. The  remainder  of  the  paper  will  assume  the  local  sector  bounds  have  been  computed  as briefly  summarized  in the following property.

\begin{property}
  \label{prop:NNsector}  
  Let $v_* \in \R^{n_\phi}$ be an equilibrium value of the activation
  input and $v_*^1 \in \R^{n_1}$ be the corresponding value at the
  first layer.  Let $\underline{v}^1$, $\bar{v}^1\in \R^{n_1}$ with
  $v^1_* \in [\underline{v}^1, \bar{v}^1]$ and their corresponding activation input bounds $\underline{v}, \overline{v}$ be given. There exist
  $\alpha_\phi$, $\beta_\phi\in \R^{n_\phi}$ such that $\phi$
  satisfies the offset local sector around the point
  $(v_{*},\phi(v_*))$ for all 
  $v_\phi \in [\underline{v}, \bar{v}]$.
\end{property}

\subsection{Lyapunov Condition}

This section uses a Lyapunov function and the offset local sector to compute an inner approximation for
the ROA of the feedback system of $G$ and $\pi$.  To
simplify notation, the interval bound on $v^1$ is assumed to be
symmetrical about $v^1_*$, i.e. $\underline{v}^1 = 2v^1_*-\bar{v}^1$ so
that $\bar{v}^1 - v_*^1 = v_*^1-\underline{v}^1$. This can be relaxed to
handle non-symmetrical intervals with minor notational changes.

\begin{theorem}
  \label{thm:NominalLyap}
  Consider the feedback system of plant $G$ in \eqref{eq:NominalSys}
  and NN $\pi$ in \eqref{eq:NNlong} with equilibrium point
  $(x_*, u_*, v_{*}, w_{*})$ satisfying \eqref{eq:EqPtxuvw}. Let
  $\bar{v}^1 \in \R^{n_1}$, $\underline{v}^1:=2v_*^1-\bar{v}^1$, and
  $\alpha_\phi, \beta_\phi \in \R^{n_\phi}$ be given vectors
  satisfying Property~\ref{prop:NNsector} for the NN.  Denote the
  $i^{th}$ row of the first weight $W^1$ by $W^1_i$ and define
  matrices
  \begin{align*}
    R_V :=\bmat{I_{n_G} & 0_{n_G \times n_\phi} \\ N_{ux} & N_{uw}}, 
    \, \mbox{ and } \,
    R_\phi := \bmat{N_{vx} & N_{vw} \\ 0_{n_\phi \times n_G} & I_{n_\phi}}.
  \end{align*}
  If there exists a matrix $P \in \mathbb{S}_{++}^{n_G}$, and vector
  $\lambda \in \R^{n_\phi}$ with $\lambda \ge 0$ such that
  \begingroup
\allowdisplaybreaks
  \begin{align} 
    \nonumber
    & R_{V}^\top \bmat{A_G^\top P A_G - P & A_G^\top P B_G 
     \\ B_G^\top P A_G & B_G^\top P B_G} R_{V}  \\
    & \hspace{0.4in}
     +  R_{\phi}^\top \Psi_\phi^\top M_\phi(\lambda) \Psi_\phi R_{\phi} < 0, 
    \label{eq:diss_nominal} \\
    &\bmat{(\bar{v}_i^{1}-v^1_{*,i})^2 & W^1_i \\ W^{1\top}_i & P}  
    \ge 0, \,\,\,i = 1,\cdots, n_1, 
    \label{eq:setcontain}
  \end{align}
  \endgroup
  then: (i) the feedback system consisting of $G$ and $\pi$ is locally
  stable around $x_*$, and (ii) the set $\mathcal{E}(P,x_*)$, defined by \eqref{eq:epsil_def}, is an 
  inner-approximation to the ROA.
\end{theorem}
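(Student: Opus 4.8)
The plan is to use the quadratic function $V(x):=(x-x_*)^\top P(x-x_*)$ as a Lyapunov function for the closed loop. Since $P\in\mathbb{S}_{++}^{n_G}$, $V$ is positive definite about $x_*$ and its unit sublevel set $\{x:V(x)\le1\}$ is exactly $\mathcal{E}(P,x_*)$ in \eqref{eq:epsil_def}, and for every $0<c\le1$ the set $\{x:V(x)\le c\}$ is a smaller ellipsoid around $x_*$. It therefore suffices to prove that $\mathcal{E}(P,x_*)$ is forward invariant and that $V$ strictly decreases along the dynamics at every point of $\mathcal{E}(P,x_*)\setminus\{x_*\}$; local Lyapunov stability and the ROA claim then follow from the standard invariant-sublevel-set argument.

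First I would express the one-step decrease $\Delta V(k):=V(x(k+1))-V(x(k))$ in the deviation coordinates $\tilde x:=x-x_*$, $\tilde u:=u-u_*$, $\tilde v_\phi:=v_\phi-v_*$, $\tilde w_\phi:=w_\phi-w_*$. Subtracting the equilibrium identities \eqref{eq:EqPtxuvw} from the plant equation \eqref{eq:NominalSys} and from the NN relation \eqref{eq:NNlft} cancels all affine terms and gives $\tilde x(k+1)=A_G\tilde x+B_G\tilde u$, $\tilde u=N_{ux}\tilde x+N_{uw}\tilde w_\phi$, and $\tilde v_\phi=N_{vx}\tilde x+N_{vw}\tilde w_\phi$. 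Writing $\zeta:=\bmat{\tilde x\\\tilde w_\phi}$, these read $\bmat{\tilde x\\\tilde u}=R_V\zeta$ and $\bmat{\tilde v_\phi\\\tilde w_\phi}=R_\phi\zeta$, and expanding $V(x(k+1))=(A_G\tilde x+B_G\tilde u)^\top P(A_G\tilde x+B_G\tilde u)$ yields $\Delta V(k)=\zeta^\top R_V^\top T R_V\zeta$ with $T:=\bmat{A_G^\top PA_G-P & A_G^\top PB_G\\ B_G^\top PA_G & B_G^\top PB_G}$.

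Next I would verify that the offset local sector bounds are valid everywhere on $\mathcal{E}(P,x_*)$. Taking a Schur complement about the positive definite $(2,2)$ block $P$ in \eqref{eq:setcontain} shows it is equivalent to $W^1_iP^{-1}W^{1\top}_i\le(\bar v^1_i-v^1_{*,i})^2$ for $i=1,\ldots,n_1$; since $\max\{(W^1_i(x-x_*))^2:x\in\mathcal{E}(P,x_*)\}=W^1_iP^{-1}W^{1\top}_i$, this gives $|W^1_i(x-x_*)|\le\bar v^1_i-v^1_{*,i}$, hence $v^1=W^1x+b^1\in[\underline v^1,\bar v^1]$ for all $x\in\mathcal{E}(P,x_*)$ by the symmetry assumption $\underline v^1=2v_*^1-\bar v^1$. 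Propagating this interval through the remaining layers as in the construction underlying Property~\ref{prop:NNsector} forces $v_\phi\in[\underline v,\bar v]$, so Lemma~\ref{lemma:sectorQC} applies with the given $\lambda\ge0$ and gives $\zeta^\top R_\phi^\top\Psi_\phi^\top M_\phi(\lambda)\Psi_\phi R_\phi\zeta\ge0$ along any trajectory segment inside $\mathcal{E}(P,x_*)$.

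Combining, for $x\in\mathcal{E}(P,x_*)$ with $x\ne x_*$ (so $\zeta\ne0$) I add the nonnegative sector term to $\Delta V(k)$ and invoke \eqref{eq:diss_nominal} to obtain $\Delta V(k)\le\zeta^\top L\zeta<0$, where $L$ is the left-hand-side matrix of \eqref{eq:diss_nominal}. Thus $V(x(k+1))<V(x(k))\le1$, so $\mathcal{E}(P,x_*)$ is forward invariant; continuity of $x\mapsto A_Gx+B_G\pi(x)$ (the activation functions are continuous) makes $\Delta V$ continuous, so on each compact annulus $\{x:c\le V(x)\le1\}$ with $c>0$ it attains a strictly negative maximum, and the usual discrete-time Lyapunov argument gives $V(\chi(k;x_0))\downarrow0$, hence $\chi(k;x_0)\to x_*$, for every $x_0\in\mathcal{E}(P,x_*)$, proving (i) and (ii). The step I expect to be the main obstacle is the bookkeeping in the third paragraph — making precise that the box $[\underline v,\bar v]$ produced by the layer-by-layer interval propagation really contains the image of $\mathcal{E}(P,x_*)$ under \emph{all} of the NN's pre-activation maps (not just the first layer), which is exactly what Lemma~\ref{lemma:sectorQC} requires; the Schur-complement and completion-of-squares steps are routine.
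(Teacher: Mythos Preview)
Your proposal is correct and follows essentially the same approach as the paper: the same quadratic Lyapunov function in deviation coordinates, the same Schur-complement argument on \eqref{eq:setcontain} to confine $v^1$ to $[\underline v^1,\bar v^1]$ on $\mathcal{E}(P,x_*)$, the same invocation of Lemma~\ref{lemma:sectorQC} after interval propagation, and the same forward-invariance plus Lyapunov descent conclusion. The only cosmetic difference is that the paper extracts an explicit $\epsilon>0$ from the strict LMI \eqref{eq:diss_nominal} to obtain $\Delta V\le-\epsilon\|x(k)-x_*\|^2$ (and then cites a standard Lyapunov theorem), whereas you obtain strict decrease directly from $L<0$ and $\zeta\ne0$ and then argue via continuity and compactness on annuli; both are valid and equivalent here.
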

\begin{proof}
  By Schur complements, \eqref{eq:setcontain} is equivalent to:
  \begin{align}
    W_i^{1} P^{-1} W_i^{1 \top} \leq (\bar{v}_i^1 - v_{*,i}^1)^2, 
      \ i = 1,\cdots,n_1.
  \end{align}  
  It follows from Lemma 1 in \cite{Hindi:98} that:
  \begin{align*}
    \mathcal{E}(P,x_*) \subseteq   \{x \in \R^{n_G}:
     \underline{v}^1 - v_*^1 \leq W^1(x - x_*) \leq \bar{v}^1 - v_*^1\}.
  \end{align*}
  Finally, use $v^1-v_*^1=W^1 (x -x_*)$  to rewrite this as:
  \begin{align*}
    \mathcal{E}(P,x_*) \subseteq \{x: \underline{v}^1 
        \leq v^1 \leq \bar{v}^1\}. 
  \end{align*}
  To summarize, feasibility of \eqref{eq:setcontain} verifies that if
  $x(k) \in \mathcal{E}(P,x_*)$ then
  $v^1(k) \in [\underline{v}^1, \bar{v}^1]$ and hence the offset local
  sector conditions are valid.

  Next, since the LMI in \eqref{eq:diss_nominal} is strict, there exists $\epsilon > 0$ such that left / right multiplication of the LMI by $\bmat{(x(k)-x_*)^\top & (w_\phi(k)-w_{*})^\top}$ and its
  transpose yields
  \begin{align*}
    &\Big[ \star \Big]^\top \bmat{A_G^\top P A_G - P & 
        A_G^\top P B_G \\ B_G^\top P A_G & B_G^\top P B_G}
        \bmat{x(k)-x_* \\ u(k)-u_*}  \\
    & + \Big[ \star \Big]^\top \Psi_\phi^\top 
       M_\phi(\lambda) \Psi_\phi \bmat{v_\phi(k)-v_{*}\\w_\phi(k)-w_{*}} \leq -\epsilon \| x(k) - x_*\|^2.
  \end{align*}
  \noindent where the entries denoted by $\star$ can be inferred from symmetry.
  Define the Lyapunov function $V(x):=(x-x_*)^\top P (x-x_*)$ and
  use \eqref{eq:NominalSys} and \eqref{eq:EqPtxuvw} to show:
  \begin{align}
    \label{eq:NominalLyapCond}
    & V(x(k+1)) - V(x(k)) + \Big[ \star \Big]^\top \Psi_\phi^\top
       M_\phi(\lambda) \Psi_\phi \bmat{v_\phi(k)-v_{*}\\w_\phi(k)-w_{*}}
    \nonumber \\
   &~~~~~~~~  \leq -\epsilon \| x(k) - x_*\|^2.
  \end{align}
  Assume $x(k) \in \mathcal{E}(P,x_*)$ for some $k \ge 0$, i.e., $V(x(k)) \leq 1$. As noted above, $x(k) \in \mathcal{E}(P,x_*)$ implies the offset local sector $[\alpha_\phi, \beta_\phi]$ around
  $v_{*}$. Then, by Lemma~\ref{lemma:sectorQC}, 
  the final term on the left side of \eqref{eq:NominalLyapCond} is $\ge 0$, and thus from \eqref{eq:NominalLyapCond} we have $V(x(k+1)) \leq 1$, i.e., $x(k+1) \in \mathcal{E}(P, x_*)$. By induction, we have that $\mathcal{E}(P, x_*)$ is forward invariant, i.e., $x(0) \in \mathcal{E}(P, x_*) \implies x(k) \in \mathcal{E}(P, x_*) \ \forall k \ge 0$. As a result, if $x(0) \in \mathcal{E}(P,x_*)$, then the final term on the left side of \eqref{eq:NominalLyapCond} is $\ge 0$ for all $k \ge 0$, and  $V(x(k+1)) - V(x(k)) \leq -\epsilon \|x(k)-x_*\|^2$ for all $k \ge 0$. It follows from a Lyapunov argument, e.g. Theorem 4.1 in
  \cite{Khalil:2002}, that $x_*$ is an asymptotically stable
  equilibrium point and $\mathcal{E}(P,x_*)$ is an inner approximation
  of the ROA.
\end{proof}
\vspace{-0.2in}

\begin{remark}
Note that $\overline{v}^1$ should be chosen with care as it affects the size of ROA inner-approximations directly: decreasing $(\overline{v}^1 - v_*^1)$ gives rise to sharper local sector bounds, which is beneficial on ROA estimation, but also restricts the region where ROA inner-approximations lie in; increasing $(\overline{v}^1 - v_*^1)$ leads to a larger region that contains ROA inner-approximations, but also provides looser local sector bounds. A possible way of choosing $\overline{v}^1$ is to parameterize $(\overline{v}^1 - v_*^1)$ as $\overline{v}^1 - v_*^1 = \delta_v \times 1_{n_1 \times 1}$ with $\delta_v \in \R_{++}$, grid the interval $[0, \overline{\delta}_v]$\footnote{$\overline{\delta}_v$ is the largest value such that \eqref{eq:diss_nominal} and \eqref{eq:setcontain} stay feasible.} where $\delta_v$ lies in, inner-approximate the ROA on the grid, and choose $\delta_v$ that leads to the largest inner-approximation.
\end{remark}

\begin{remark}
    In the paper, the NN controller is assumed to be state-feedback. For the output-feedback case, i.e., $u = \pi(C x)$, where $C \in \R^{n_y\times n_G}$, the stability analysis can be performed similarly, using a new $N_{vx}$ defined as $N_{vx}:= \left[\begin{smallmatrix}W^1 C \\ 0_{(n_2+...+n_\ell)\times n_G}\end{smallmatrix}\right]$.
\end{remark}




\section{Robust Stability Analysis}\label{sec:robustanaly}

\subsection{Problem Formulation}
Consider the uncertain feedback system in
Figure~\ref{fig:UncFeedback}, consisting of an uncertain plant
$F_u(G,\Delta)$ and a NN controller $\pi$ as defined by
\eqref{eq:NNlong}. The uncertain plant $F_u(G,\Delta)$ is an
interconnection of a nominal plant $G$ and a perturbation
$\Delta$. The nominal plant $G$ is defined by the following equations:
\begin{subequations} \label{eq:UncertainSys}
\begingroup
\allowdisplaybreaks
\begin{align}
x(k+1) &= A_G\ x(k) + B_{G1} \ q (k) + B_{G2} \ u(k), \\
p(k) &= C_G \ x(k) + D_{G1} \ q (k) + D_{G2} \ u(k),
\end{align}
\endgroup
\end{subequations}
where $x(k) \in \R^{n_{G}}$ is the state, $u(k) \in \R^{n_u}$ is the
control input, $p(k) \in \R^{n_p}$ and $q(k) \in \R^{n_q}$ are
the input and output of $\Delta$, $A_G \in \R^{n_G \times n_G}$,
$B_{G1} \in \R^{n_G \times n_q}$, $B_{G2} \in \R^{n_G \times n_{u}}$,
$C_G \in \R^{n_p \times n_G}$, $D_{G1} \in \R^{n_{p} \times n_q}$, and
$D_{G2} \in \R^{n_p \times n_{u}}$. The perturbation is a bounded,
causal operator
$\Delta :\ell^{n_{p}}_{2e} \rightarrow \ell^{n_q}_{2e}$. The nominal
plant $G$ and perturbation $\Delta$ form the interconnection
$F_u(G,\Delta)$ through the constraint
\begin{align}
q(\cdot) = \Delta(p(\cdot)). \label{eq:perturbation}
\end{align}
Denote the set of perturbations to be considered as $\mathcal{S}$.

\vspace{-0.1in}
\begin{figure}[h]
  \centering
  \includegraphics[width=0.43\textwidth]{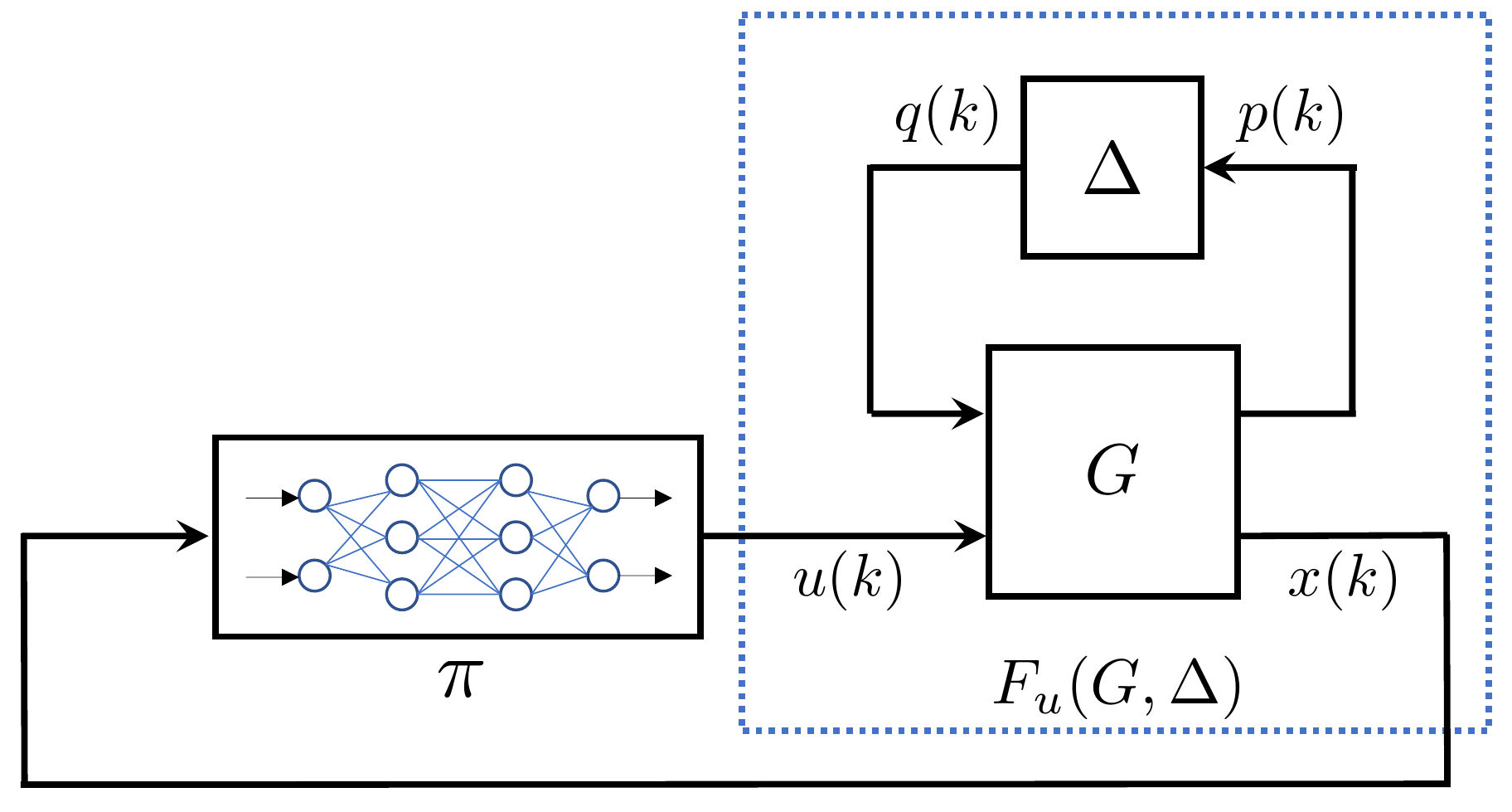}
  \caption{Feedback system with uncertain plant $F_u(G,\Delta)$ and  
    NN controller $\pi$}
  \label{fig:UncFeedback}    
\end{figure} 

\vspace{-0.15in}

\begin{assumption}\label{assume:zero_eq}
In this section, we assume (i) the equilibrium point $(x_*,u_*,v_*,w_*,p_*,q_*)$ of the feedback system is at the origin, and (ii) $0 = \Delta(0)$ for all $\Delta \in \mathcal{S}$.
%
Note that $\Delta$ is modeled as an
operator mapping inputs to outputs.  If $\Delta$ has an internal state
then there is an implicit assumption that it has zero initial condition.
\end{assumption}

Let $\chi(k;x_0,\Delta)$ denote the solution to the feedback system of $F_u(G,\Delta)$ and $\pi$ with $\Delta \in \mathcal{S}$ at
time $k$ from the initial condition $x(0)=x_0$\footnote{An
  input/output model is used for the perturbation $\Delta$ so that its
  internal state and initial condition is not explicitly considered.}.
Define the robust ROA associated with $x_*$ as follows.
\begin{definition}
  The robust ROA of the feedback system with the uncertain
  plant $F_u(G,\Delta)$ and NN $\pi$ is defined as:
  \vspace{-0.1in}
  \begin{align}
    \mathcal{R} := \{x_0 \in \R^{n_G}: \lim_{k \rightarrow \infty} \chi(k; x_0, \Delta) = x_* \ \forall \Delta \in \mathcal{S}\}. 
  \end{align}
\end{definition}

\vspace{-0.15in}
The objective is to prove the uncertain feedback system is asymptotically
stable and, if so, to find the largest estimate of the robust ROA
using an ellipsoidal inner approximation.

\subsection{Integral Quadratic Constraints}
The perturbation can represent various types of  uncertainty
\cite{Megretski:97}, \cite{Veenman:16}, including saturation, time
delay, unmodeled dynamics, and slope-restricted nonlinearities. The
input-output relationship of $\Delta$ is characterized with an
integral quadratic constraint (IQC), which consists of a `virtual'
filter $\Psi_\Delta$ applied to the input $p$ and output $q$ of
$\Delta$ and a constraint  on the output $r$ of
$\Psi_\Delta$. The filter $\Psi_\Delta$ is an LTI system of the form:
\begin{subequations}\label{eq:filterdyn}
\begin{align}
\psi(k+1) &= A_\Psi \ \psi(k) + B_{\Psi 1} \ p(k) + B_{\Psi 2} \ q(k) \\
r(k) &= C_\Psi \ \psi(k) + D_{\Psi 1} \ p(k) + D_{\Psi 2} \ q(k)
\label{eq:r_def}\\
\psi(0) &= 0
\end{align}
\end{subequations}
where $\psi(k) \in \R^{n_\psi}$ is the state, $r(k) \in \R^{n_r}$ is the
output, and $A_\Psi$ is a Schur matrix. The state matrices have compatible dimensions. The dynamics of $\Psi_\Delta$ can be compactly denoted by $\left[ \begin{array}{c|cc} A_\Psi & B_{\Psi 1}  & B_{\Psi 2} \\ \hline C_\Psi & D_{\Psi 1}  & D_{\Psi 2} \end{array} \right]$. By $(p_*, q_*)=0$ from Assumption~\ref{assume:zero_eq}, the equilibrium state
$\psi_* \in \R^{n_\psi}$ of \eqref{eq:filterdyn} is also zero.

The
Lyapunov analysis in the next subsection makes use of
time-domain IQCs as defined next:
\begin{definition}
  Let $\Psi_\Delta \in \mathbb{RH}_{\infty}^{n_r \times (n_p + n_q)}$
  and $M_\Delta \in \mathbb{S}^{n_r}$ be given. A bounded, causal
  operator $\Delta :\ell^{n_p}_{2e} \rightarrow \ell^{n_q}_{2e}$
  satisfies the time domain IQC defined by
  $(\Psi_\Delta, M_\Delta)$ if the following inequality holds for all
  $p \in \ell_{2e}^{n_{p}}$, $q = \Delta(p)$ and for all $N \ge 0$
  \begin{align} \label{eq:hardIQC}
    \sum_{k=0}^N r(k)^\top M_\Delta r(k) \ge 0.
  \end{align}
\end{definition}

The notation $\Delta \in$ IQC$(\Psi_\Delta, M_\Delta)$ indicates that
$\Delta$ satisfies the IQC defined by $\Psi_\Delta$ and
$M_\Delta$. Therefore, the precise relation (\ref{eq:perturbation}), for
analysis, is replaced by the constraint \eqref{eq:hardIQC} on $r$.
The QC proposed in Lemma~\ref{lemma:sectorQC} is a
special instance of a time-domain IQCs. Specifically,
Lemma~\ref{lemma:sectorQC} defines a QC that holds
at each time step $k$ and hence the inequality also holds summing over
any finite horizons. This is referred to as the offset local sector IQC.

The time-domain IQCs, as defined here, hold on any finite horizon
$N\ge 0$.  These are typically called ``hard IQCs''
\cite{Megretski:97}.  IQCs can also be defined in the frequency domain
and equivalently expressed as time-domain constraints over an infinite
horizon ($N=\infty$).  These are called soft IQCs. Although this paper focuses
on the use of hard IQCs, it is possible to also incorporate soft
IQCs \cite{IANNELLI2019,Fetzer2018_softIQC, YinBackward,YinForward}.

\vspace{-0.2in}
\subsection{Lyapunov Condition}
Let $\zeta := \bsmtx x \\ \psi \esmtx \in \R^{n_\zeta}$ define the
extended state vector, $n_\zeta = n_G + n_\psi$, whose dynamics are
\begin{subequations} \label{eq:extendsys}
    \begingroup
	\allowdisplaybreaks
\begin{align}
\zeta(k+1) &= \mathcal{A} \ \zeta(k) + \mathcal{B} \bmat{q(k) \\ u(k)} \\
r(k) &= \mathcal{C} \ \zeta(k) + \mathcal{D} \bmat{q(k) \\ u(k)} \\
u(k) &= \pi(x(k))
\end{align}
\endgroup
\end{subequations}
where the state-space matrices are
    \begingroup
	\allowdisplaybreaks
\begin{align}
\mathcal{A} = \bmat{A_G & 0 \\ B_{\Psi 1} C_G & A_\Psi}, \ \mathcal{B} =\bmat{B_{G1} & B_{G2} \\B_{\Psi 1}D_{G1} + B_{\Psi2} & B_{\Psi 1} D_{G2} }, \nonumber \\
\mathcal{C} = \bmat{D_{\Psi 1}C_G & C_\Psi}, \ \mathcal{D} = \bmat{D_{\Psi 1} D_{G1} + D_{\Psi 2} & D_{\Psi 1} D_{G 2}}. \nonumber
\end{align}
\endgroup
Let $\zeta_* := \bsmtx x_* \\ \psi_* \esmtx = 0$ define the equilibrium
point of the extended system \eqref{eq:extendsys}. Since IQCs
implicitly constrain the input $p$ of the extended system
\eqref{eq:extendsys}, the response of the extended system subject to
IQCs ``covers'' the behaviors of the original uncertain feedback
system. The following theorem provides a method for
inner-approximating the robust ROA by
performing analysis on the extended system subject to IQCs.

\begin{theorem}
  \label{thm:RobustLyap}
  Consider the feedback system of an uncertain plant $F_u(G,\Delta)$ in
  \eqref{eq:UncertainSys}--\eqref{eq:perturbation}, and the NN $\pi$
  in \eqref{eq:NNlong} with zero equilibrium point
  $(\zeta_*, u_*, v_*, w_*, p_*, q_*)$. Assume $\Delta \in$
  IQC$(\Psi_\Delta, M_\Delta)$ with $\Psi_\Delta$ and $M_\Delta$
  given. Let
  $\bar{v}^1 \in \R^{n_1}$, $\underline{v}^1:=-\bar{v}^1$, and
  $\alpha_\phi, \beta_\phi \in \R^{n_\phi}$ be given vectors
  satisfying Property~\ref{prop:NNsector} for the NN, and define matrices
    \begingroup
	\allowdisplaybreaks
\begin{align*}
    R_V &= \left[ \begin{array}{ccc} I_{n_\zeta} &0&0\\ \hline 0 & 0& I_{n_{q}}\\N_{u\zeta} & N_{uw}&0 \end{array} \right], \ N_{u\zeta} = [N_{ux}, 0_{n_u \times n_\psi}],\\
    R_{\phi} &= \bmat{N_{v\zeta} & N_{vw} & 0 \\0 & I_{n_\phi} & 0}, \ N_{v\zeta} = [N_{vx}, 0_{n_{\phi} \times n_\psi}],\\
    \mathcal{W}_i^1 &= \bmat{W^1_i & 0_{1\times n_\psi}}, \ W^1_i \ \text{is the $i^{th}$ row of} \ W^1.
  \end{align*}
  \endgroup
  If there exists a matrix $P \in \mathbb{S}_{++}^{n_\zeta}$, and vector
  $\lambda \in \R^{n_\phi}$ with $\lambda \ge 0$ such that
  \begin{subequations}
    \begin{align}
      &R_V^\top \bmat{\mathcal{A}^\top P \mathcal{A} - P & \mathcal{A}^\top P \mathcal{B} \\ \mathcal{B}^\top P \mathcal{A} & \mathcal{B}^\top P \mathcal{B}} R_V  + R_{\phi}^\top \Psi_\phi^\top M_\phi(\lambda) \Psi_\phi R_{\phi} \nonumber \\
      & ~~~~~~ + R_V^\top \bmat{\mathcal{C} & \mathcal{D}}^\top M_\Delta \bmat{\mathcal{C} & \mathcal{D}} R_V < 0, \label{eq:dissineq} \\
      &\bmat{(\bar{v}^1_i)^2 & \mathcal{W}^1_i \\ \mathcal{W}^{1\top}_i & P}  \ge 0, \ i = 1,\cdots, n_1, \label{eq:robsetcontain}
    \end{align}
  \end{subequations}
  then: (i) the feedback system comprising 
  $F_u(G,\Delta)$ and $\pi$ is locally stable around $x_*$ for any
  $\Delta\in$~IQC$(\Psi_\Delta,M_\Delta)$, and (ii) the intersection of
  $\mathcal{E}(P,\zeta_*)$ with the hyperplane $\psi = 0$, i.e.
  $\mathcal{E}(P_{x},x_*)$ where $P_x\in\R^{n_G \times n_G}$ is the upper left block of $P$,
 is an
  inner-approximation to the robust ROA.
\end{theorem}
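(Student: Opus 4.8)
The plan is to mimic the structure of the proof of Theorem~\ref{thm:NominalLyap}, but now carrying the IQC filter state along inside the extended state $\zeta$ and using the hard-IQC inequality \eqref{eq:hardIQC} in place of (or alongside) the offset-local-sector QC. First I would deal with the set-containment piece: by Schur complements, \eqref{eq:robsetcontain} is equivalent to $\mathcal{W}^1_i P^{-1} \mathcal{W}^{1\top}_i \le (\bar v^1_i)^2$ for $i=1,\dots,n_1$, and then, exactly as in the nominal proof via Lemma~1 of \cite{Hindi:98}, $\mathcal{E}(P,\zeta_*) \subseteq \{\zeta : -\bar v^1 \le \mathcal{W}^1 \zeta \le \bar v^1\}$. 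Since $\mathcal{W}^1 \zeta = W^1 x = v^1$ (using $\psi$-block is zero in $\mathcal{W}^1$ and $x_* = 0$), this shows $\zeta(k)\in\mathcal{E}(P,\zeta_*)$ implies $v^1(k)\in[\underline v^1,\bar v^1]$, so the offset local sector bounds (here centered at the origin by Assumption~\ref{assume:zero_eq}) are valid whenever the trajectory stays in $\mathcal{E}(P,\zeta_*)$.

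Next I would handle the dissipation inequality. Because \eqref{eq:dissineq} is strict, there is $\epsilon>0$ so that left/right multiplying by $\bmat{\zeta(k)^\top & w_\phi(k)^\top}$ (and transpose) and using the identities in \eqref{eq:extendsys}, \eqref{eq:NNlft}, together with $q(k)=\Delta(p(k))$ implicitly through the filter, yields
\begin{align*}
  V(\zeta(k+1)) - V(\zeta(k)) + \Big[\star\Big]^\top \Psi_\phi^\top M_\phi(\lambda)\Psi_\phi \bmat{v_\phi(k)\\ w_\phi(k)}
  + r(k)^\top M_\Delta r(k) \le -\epsilon\|\zeta(k)\|^2,
\end{align*}
where $V(\zeta):=\zeta^\top P\zeta$; the $R_V$ and $R_\phi$ selector matrices are exactly what convert the multiplication vector into $(\zeta,q,u)$ and $(v_\phi,w_\phi)$ respectively, and the extra $\bmat{\mathcal{C}&\mathcal{D}}^\top M_\Delta\bmat{\mathcal{C}&\mathcal{D}}$ term produces the $r(k)^\top M_\Delta r(k)$ using \eqref{eq:r_def}. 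Now sum this from $k=0$ to $N$. The telescoping Lyapunov terms give $V(\zeta(N{+}1))-V(\zeta(0))$; the IQC term $\sum_{k=0}^N r(k)^\top M_\Delta r(k)$ is $\ge 0$ by \eqref{eq:hardIQC} since $\Delta\in\mathrm{IQC}(\Psi_\Delta,M_\Delta)$ and $\psi(0)=0$; and the $M_\phi(\lambda)$ term is $\ge 0$ at each $k$ by Lemma~\ref{lemma:sectorQC}, provided $v_\phi(k)\in[\underline v,\bar v]$.

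The induction/forward-invariance argument is the crux, and it is slightly more delicate than in the nominal case because the QC and IQC terms are only guaranteed nonnegative while the trajectory remains in $\mathcal{E}(P,\zeta_*)$, yet $\mathcal{E}(P,\zeta_*)$ is a set in the extended $\zeta$-space whereas the robust ROA lives in $x$-space. I would argue inductively: if $\zeta(k)\in\mathcal{E}(P,\zeta_*)$ (i.e. $V(\zeta(k))\le 1$), then $v^1(k)\in[\underline v^1,\bar v^1]$, hence (propagating through the NN via Property~\ref{prop:NNsector}) $v_\phi(k)\in[\underline v,\bar v]$, so the $M_\phi$ term at step $k$ is $\ge 0$; combined with the running IQC sum being $\ge 0$ over $\{0,\dots,k\}$ (which needs all earlier steps to have been in the ellipsoid — this is where the induction hypothesis is used along the whole prefix of the trajectory, not just at step $k$), the one-step inequality gives $V(\zeta(k{+}1)) \le V(\zeta(k)) \le 1$, so $\zeta(k{+}1)\in\mathcal{E}(P,\zeta_*)$. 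Thus $\mathcal{E}(P,\zeta_*)$ is forward invariant for the extended system, and along any such trajectory $V(\zeta(k{+}1))-V(\zeta(k))\le -\epsilon\|\zeta(k)\|^2$ for all $k$, so by the standard discrete Lyapunov argument (Theorem~4.1 in \cite{Khalil:2002}) $\zeta(k)\to 0$, hence $x(k)\to 0$. Finally, since $\psi(0)=0$, the initial condition $\zeta(0)=\bsmtx x_0\\0\esmtx$ lies in $\mathcal{E}(P,\zeta_*)$ precisely when $x_0^\top P_x x_0 \le 1$, i.e. $x_0\in\mathcal{E}(P_x,x_*)$; and since this holds for every $\Delta\in\mathrm{IQC}(\Psi_\Delta,M_\Delta)$, the set $\mathcal{E}(P_x,x_*)$ is an inner-approximation of the robust ROA. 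The main obstacle to write carefully is bookkeeping the selector matrices $R_V,R_\phi$ and verifying the three quadratic-form terms assemble exactly as claimed, together with making the prefix-dependent induction on the IQC sum airtight.
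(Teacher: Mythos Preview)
Your proposal follows the paper's approach closely, and the set-containment step, the dissipation-inequality derivation, and the final reduction from $\zeta$-space to $x$-space via $\psi(0)=0$ are all as in the paper. Two points need correction.

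First, a minor one: the vector you left/right multiply \eqref{eq:dissineq} by must be $\bmat{\zeta(k)^\top & w_\phi(k)^\top & q(k)^\top}$, not $\bmat{\zeta(k)^\top & w_\phi(k)^\top}$; the LMI lives in $\mathbb{S}^{n_\zeta+n_\phi+n_q}$, and $R_V$ needs the $q$-block to produce the $(\zeta,q,u)$ triple. You essentially acknowledge this when you describe what $R_V$ does, so this is likely just a slip.

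Second, and this is a real gap: you cannot conclude a per-step Lyapunov decrease $V(\zeta(k{+}1)) \le V(\zeta(k))$, nor $V(\zeta(k{+}1))-V(\zeta(k)) \le -\epsilon\|\zeta(k)\|^2$, from a hard IQC. The term $r(k)^\top M_\Delta r(k)$ need not be nonnegative at each $k$; only its running sum $\sum_{j=0}^k r(j)^\top M_\Delta r(j)$ is guaranteed $\ge 0$ by \eqref{eq:hardIQC}. So the invocation of Theorem~4.1 in \cite{Khalil:2002} fails as written. The paper's fix --- which your parenthetical about ``the whole prefix of the trajectory'' almost reaches --- is to sum the one-step inequality from $0$ to $N$ and use the hard IQC on the whole sum: assuming inductively that $\zeta(0),\dots,\zeta(N)\in\mathcal{E}(P,\zeta_*)$ so that all sector terms are nonnegative, one obtains
\[
V(\zeta(N{+}1)) \;\le\; V(\zeta(0)) - \sum_{k=0}^N \epsilon\|\zeta(k)\|^2 \;\le\; V(\zeta(0)) \;\le\; 1.
\]
This gives forward invariance of $\mathcal{E}(P,\zeta_*)$ by induction on $N$, and then the same summed inequality (now valid for all $N$) gives $\sum_k \|\zeta(k)\|^2 < \infty$, hence $\zeta(k)\to 0$. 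You do not get, and do not need, monotone decrease of $V$.
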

\begin{proof}
  As in the proof of Theorem~\ref{thm:NominalLyap}, feasibility of
  \eqref{eq:robsetcontain} implies that if
  $\zeta(k) \in \mathcal{E}(P,\zeta_*)$ then
  $v^1(k) \in [\underline{v}^1, \bar{v}^1]$ and hence the offset local
  sectors conditions are valid. Since the LMI in \eqref{eq:dissineq}
  is strict, there exists $\epsilon>0$ such that left/right
  multiplication of the LMI by
  $[(\zeta(k)-\zeta_*)^\top, (w_\phi(k) - w_{*})^\top, (q(k) -
  q_*)^\top]$ and its transpose yields:
{\small
  \begin{align}
    &\Bigg[ \star \Bigg]^\top \bmat{\mathcal{A}^\top P \mathcal{A} - P & \mathcal{A}^\top P \mathcal{B} \\ \mathcal{B}^\top P \mathcal{A} & \mathcal{B}^\top P \mathcal{B}} \left[ \begin{array}{c} \zeta(k) - \zeta_*\\\hline q(k)-q_* \\ u(k)-u_* \end{array} \right]  \nonumber \\
    &+ \Bigg[ \star \Bigg]^\top \bmat{\mathcal{C} & \mathcal{D}}^\top M_\Delta \bmat{\mathcal{C} & \mathcal{D}}\left[ \begin{array}{c} \zeta(k) - \zeta_*\\\hline q(k)-q_* \\ u(k)-u_* \end{array} \right]   \nonumber \\
    &+ \Big[ \star \Big]^\top \Psi_\phi^\top M_\phi(\lambda) \Psi_\phi \bmat{v_\phi(k)-v_*\\w_\phi(k) - w_*}  
    \le -\epsilon \| \zeta(k)-\zeta_* \|^2, \nonumber 
  \end{align}}
  \noindent Define the Lyapunov function
  $V(\zeta): = (\zeta - \zeta_*)^\top P (\zeta - \zeta_*)$, and use
   \eqref{eq:extendsys}
  to show:
  \begin{align*}
    &V(\zeta(k+1)) - V(\zeta(k)) + r(k)^\top M_\Delta r(k)  
      \\
    &+ \Big[ \star \Big]^\top  \Psi_\phi^\top M_\phi(\lambda) \Psi_\phi 
    \bmat{v_\phi(k)-v_{*}\\w_\phi(k)-w_{*}}
    \le -\epsilon \| \zeta(k)-\zeta_* \|^2 .
  \end{align*}
  Sum this inequality from $k=0$ to any finite time $N\ge 0$.  The
  third and fourth term on the left side will be $\ge 0$ by the local
  sector conditions and the IQC. This yields:
  \begin{align*}
     V(\zeta(N+1)) - V(\zeta(0))
     \le - \sum_{k=0}^N \epsilon \| \zeta(k)-\zeta_* \|^2 .
  \end{align*}
  Thus if $\zeta(0)\in \mathcal{E}(P,\zeta_*)$ then
  $\zeta(k)\in \mathcal{E}(P,\zeta_*)$ for all $k\ge 0$.  Moreover,
  this inequality implies that $\zeta(N)\rightarrow \zeta_*$ as
  $N\rightarrow \infty$. The initial condition for the virtual filter
  is $\psi(0)=0$ so that $\zeta(0)\in \mathcal{E}(P,\zeta_*)$ is
  equivalent to $x(0) \in \mathcal{E}(P_x,x_*)$.  Hence
  $\mathcal{E}(P_{x},x_*)$ is an inner approximation for the ROA. 
\end{proof}

For a particular perturbation $\Delta$ there is typically a class of
valid time-domain IQCs defined by a fixed filter $\Psi_\Delta$ and a
matrix $M_{\Delta}$ drawn from a constraint set $\mathcal{M}_\Delta$.
Therefore when formulating an optimization problem, along with $P$ and
$\lambda$, we can treat $M_\Delta \in \mathcal{M}_\Delta$ as an
additional decision variable to reduce conservatism. In this paper, the set
$\mathcal{M}_\Delta$ is restricted to one that is described by LMIs
\cite{Veenman:16}.  Using trace$(P_x)$ as the cost function to minimize along with the LMIs developed before, we have the following optimization to compute the ``largest'' ROA inner-approximation:
\begin{align}\label{eq:final_opt}
    &\min_{P, \lambda, M_\Delta} \ \text{trace$(P_x)$} \ \text{s.t.} \ \eqref{eq:dissineq} \ \text{and} \ \eqref{eq:robsetcontain} \ \text{hold,} 
\end{align}
which is convex in $(P, \lambda, M_\Delta)$. The strict inequality in \eqref{eq:dissineq} can be enforced by either replacing $<0$ with $\leq -\epsilon I$ with $\epsilon = 1 \times 10^{-6}$, or solving \eqref{eq:final_opt} with a non-strict inequality $\leq 0$, and checking if the constraint is active afterwards.

\vspace{-0.2in}

\subsection{IQCs for Combined Activation Functions $\phi$}
Now that we have the general framework that merges Lyapunov theory with IQCs, we will revisit the problem of describing the activation functions $\phi$ using more general tools. Recall that offset local sector QCs have been used in Sections~\ref{sec:nomanaly} and \ref{sec:robustanaly} to bound activation functions $\phi$. However, these local sectors fail to incorporate slope bounds of $\phi$. In this subsection, in addition to local sectors, we will use off-by-one IQCs~\cite{Lessard2015} to capture the slope information of $\phi$ to achieve less conservative ROA inner-approximations.

Besides the local sector bound $\alpha_\phi, \beta_\phi$, the bounds $\underline{v}, \overline{v}$ on activation input $v_\phi$ can also be used to compute the local slope bounds $[m_\phi, L_\phi]$ of $\phi$, with $m_\phi, L_\phi \in \R^{n_\phi}$. For example, $\phi_i(v_{\phi,i}) := \tanh(v_{\phi,i})$ restricted to the interval $[\underline{v}_i, \overline{v}_i]$ for $i=1,...,n_\phi$ satisfies the local slope bound $[m_{\phi,i}, L_{\phi,i}]$  with $m_{\phi,i} := \min\left(\frac{d \tanh (v_i)}{d v_i}|_{v_i = \underline{v}_i}, \frac{d \tanh (v_i)}{d v_i}|_{v_i = \overline{v}_i}\right)$, and $L_{\phi,i} := 1$. If $w_\phi = \phi(v_\phi)$ with $v_\phi(k) \in [\underline{v}, \overline{v}]$, then $\phi$ also satisfies the hard IQC defined by $(\Psi_\text{off}, M_\text{off})$, where 
    \begingroup
	\allowdisplaybreaks
\begin{align}
    &\Psi_\text{off}  :=  \left[ \begin{array}{c|cc} 0_{n_\phi} & -\text{diag}(L_\phi)  &I_{n_\phi} \\ \hline I_{n_\phi} & \text{diag}(L_{\phi})  &-I_{n_\phi} \\ 0_{n_\phi} & -\text{diag}(m_{\phi})  &I_{n_\phi} \end{array} \right],  \nonumber \\
     &M_\text{off}(\eta) := \bmat{0_{n_\phi} & \text{diag}(\eta) \\ 
             \text{diag}(\eta) & 0_{n_\phi}}, \ \text{for all} \ \eta \in \R^{n_\phi} \ \text{with} \ \eta \ge 0. \nonumber
\end{align}
\endgroup
This is the so-called ``off-by-one'' IQC \cite{Lessard2015}, which is a special instance of the Zames-Falb IQC~\cite{Zames1968,Kulkarni2002}. It provides constraints
that relate the activation at different time instances, e.g. between
$\phi_i(v_{\phi,i}(k))$ and $\phi_i(v_{\phi,i}(k+1))$ for any
$i=1,\ldots,n_\phi$. 

The analysis on the feedback system of $F_u(G,\Delta)$ and $\pi$ can be instead performed on the extended system made up by $G$, $\Psi_\Delta$ and $\Psi_\text{off}$ with additional constraints that $\Delta \in$ IQC$(\Psi_\Delta, M_\Delta)$, and $\phi$ satsifies the offset local sector and $\phi \in$ IQC$(\Psi_\text{off}, M_\text{off})$. However, since $\Psi_\text{off}$ introduces a number of $n_\phi$ states to the extended system,  the size of the corresponding Lyapunov matrix $P$ will increase from $\mathbb{S}_{++}^{n_x + n_\psi}$ to $\mathbb{S}_{++}^{n_x + n_\psi + n_\phi}$, which leads to longer computation time. The effectiveness of the off-by-one IQC is demonstrated in Section~\ref{sec:vehicle_example}.

\section{Examples}\label{sec:example}
In the following examples, the optimization~\eqref{eq:final_opt} is solved using MOSEK with CVX. \footnote{The code is available at \href{https://github.com/heyinUCB/Stability-Analysis-using-Quadratic-Constraints-for-Systems-with-Neural-Network-Controllers}{https://github.com/heyinUCB/Stability-Analysis-using-Quadratic-Constraints-for-Systems-with-Neural-Network-Controllers}.}

\vspace{-0.2in}
\subsection{Inverted pendulum}
Consider the nonlinear inverted pendulum example with mass $m = 0.15$ kg, length $l = 0.5$ m, and
friction coefficient $\mu = 0.5$ Nms$/$rad. The dynamics are:
\begin{align}
  \ddot{\theta}(t) &= \frac{mgl\sin(\theta(t)) - \mu \dot{\theta}(t) + \text{sat}(u(t))}{m l^2}, \label{eq:thetadot}
\end{align}
where $\theta$ is the angular position (rad) and $u$ is the control
input (Nm). The plant state is $x = [\theta, \dot{\theta}]$. The saturation function is defined as $\text{sat}(u) = \text{sgn}(u)\min(|u|,u_{\text{max}})$,
with $u_{\text{max}} = 0.7$~Nm. The
controller $\pi$ is obtained through a reinforcement learning process
using policy gradient \cite{Peters2008, Schulman2015,
  Kakade01}. During training, the agent decision making process is
characterized by a probability:
$u(k) \sim Pr(u(k) = u \ \vert \ x(k)=x)$ for all $u \in \R$ and
$x \in \R^2$ where the probability is a Gaussian distribution with
mean $\pi(x(k))$, and standard deviation $\sigma$. After training, the
policy mean $\pi$ is used as the deterministic controller
$u(k) = \pi(x(k))$. The controller $\pi$ is parameterized by a
2-layer, feedforward NN with $n_1 = n_2 = 32$ and $\tanh$ as the
activation function for both layers. The biases in the NN are set to
zero during training to ensure that the equilibrium point is $x_*=0$
and $u_*=0$. The dynamics used for training are the discretized
version of \eqref{eq:thetadot} with sampling time $dt = 0.02$~s.

We rearrange \eqref{eq:thetadot} into the 
form:
\begin{subequations}
    \begingroup
	\allowdisplaybreaks
\begin{align}
  \ddot{\theta}(t) &= \frac{-mgl q(t) + mgl \theta(t) - \mu \dot{\theta}(t) + \text{sat}(u(t))}{m l^2}, \\
  q(t) &= \Delta(\theta(t)) := \theta(t) - \sin(\theta(t)).
\end{align}
\endgroup
\end{subequations}
The static nonlinearity $\Delta(\theta)=\theta - \sin(\theta)$  is slope-restricted, and sector bounded.
If we assume that $\theta(k) \in [\underline{\theta}, \overline{\theta}]$ with 
$\bar{\theta} = -\underline{\theta} = 0.73$, then the nonlinearity is
slope-restricted in $[0, 0.2548]$, and sector bounded in $[0,
0.087]$. 
We also assume that
$v^1 \in [\underline{v}^1, \bar{v}^1]$ with
$\bar{v}^1 = -\underline{v}^1 = \delta_v \times 1_{32 \times 1}$ using $\delta_v = 0.1$. Both assumptions are
verified using the ROA inner-approximation. Two types
of IQCs are used to characterize the nonlinearity $\Delta(\cdot)$: an off-by-one IQC to capture the slope information, and a local
sector IQC to express the local sector bound. Only the local sector IQC is used to characterize the activation functions $\phi$. The saturation function is static and can also be described using a local sector
bound. Let $\bar{u}$ be the largest possible control command from
$\pi$ induced from the assumption that
$v^1 \in [\underline{v}^1, \bar{v}^1]$. Then the
saturation function  satisfies the local sector
$[\alpha, \beta]$, where $\alpha:=\frac{u_\text{max}}{\bar{u}}$ and
$\beta:=1$. 

Figure~\ref{fig:penROA} shows the
boundaries for the sets
$\{x: \underline{v}^1 \leq v^1 \leq \bar{v}^1\}$ and
$\{x: \underline{\theta} \leq \theta \leq \bar{\theta}\}$ 
with orange and brown lines, the ROA inner-approximation  with
a blue ellipsoid, and the phase portrait of the closed-loop system, with green and red curves representing trajectories inside and outside the ROA.
\vspace{-0.1in}
\begin{figure}[h]
	\centering
	\includegraphics[width=0.35\textwidth]{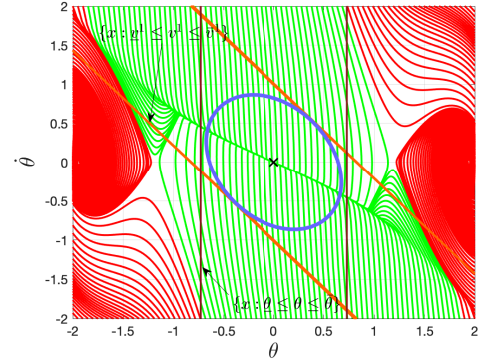}
	\caption{A ROA inner-approximation of the inverted pendulum}
	\label{fig:penROA}    
\end{figure}

\vspace{-0.25in}
\subsection{Vehicle lateral control}\label{sec:vehicle_example}
Consider the vehicle lateral dynamics from \cite{ALLEYNE:1997}:
\begingroup
\allowdisplaybreaks
{\small
\begin{align}\label{eq:vehidyn}
\bmat{\dot{e} \\ \ddot{e} \\ \dot{e}_\theta \\ \ddot{e}_\theta} = &\bmat{0 & 1 & 0&0 \\ 0 & \frac{C_{\alpha f} + C_{\alpha r}}{m U} & -\frac{C_{\alpha f} + C_{\alpha r}}{m} & \frac{a C_{\alpha f} - b C_{\alpha r}}{m U} \\ 0 & 0 & 0&1 \\ 0 & \frac{a C_{\alpha f} - b C_{\alpha r}}{I_z U} & -\frac{a C_{\alpha f} - b C_{\alpha r}}{I_z} & \frac{a^2 C_{\alpha f} + b^2 C_{\alpha r}}{I_z U}} \bmat{e \\ \dot{e} \\ e_\theta \\ \dot{e}_\theta}\nonumber \\
+&\bmat{0 \\ -\frac{C_{\alpha f}}{m} \\ 0 \\ -\frac{a C_{\alpha f}}{I_z}} u + \bmat{0\\ \frac{a C_{\alpha f} - b C_{\alpha r}}{m} - U^2 \\ 0 \\ \frac{a^2 C_{\alpha f} + b^2 C_{\alpha r}}{I_z}} c
\end{align}}
\endgroup
where $e$ is the perpendicular distance to the lane edge (m), and $e_\theta$ is the angle between the tangent to the straight section of the road and the projection of the vehicle's longitudinal axis (rad). Let $x = [e,\dot{e}, e_\theta, \dot{e}_\theta]^\top$ denote the plant state. The control $u$ is the steering angle of the front wheel (rad), the disturbance $c$ is the road curvature ($1/$m), and the parameters are: longitudinal velocity $U = 28$ m$/$s, front cornering stiffness $C_{\alpha f} = -1.232\times 10^5$ N$/$rad, rear cornering stiffness $C_{\alpha r} = -1.042\times 10^5$ N$/$rad, mass $m = 1.67 \times 10^3$ kg, moment of inertia $I_z = 2.1 \times 10^3$ kg$/\text{m}^2$, distances from vehicle center of gravity to front axle $a=0.99$ m and rear axle $b = 1.7$ m.

Again, the controller $\pi$ is obtained using policy gradient, and is parameterized by a 2-layer, feedforward NN, with $n_1 = n_2 = 32$ and $\tanh$ as the activation function for both layers. The training process uses a discretized version of \eqref{eq:vehidyn} with sampling time $dt = 0.02$~s and draws the curvature $c(k)$ at each time step from an interval $[-1/200, 1/200]$. The control command derived from $u(k) = \pi(x(k))$ enters the vehicle dynamics through a saturation function sat$(\cdot)$ with $u_\text{max} = \pi/6$.
Let $u_{\text{sat}} := \text{sat}(\pi(x))$ define the saturated control signal.

The analysis is performed for  a constant 
curvature $c \equiv 0$, resulting in a zero equilibrium state
$x_* =0$.  In the analysis problem,
on top of saturation, we also add a norm-bounded LTI uncertainty
$\Delta_{\text{LTI}} \in \mathbb{RH}_\infty$ with
$\norm{\Delta_\text{LTI}}_\infty~\leq~0.1$ to the control input. This
is used to assess the robustness of the NN controller against 
actuator uncertainty. As shown in Figure~\ref{fig:veh}, 
the actual input to the vehicle dynamics is 
\begin{align*}
    u_{\text{pert}}(k) = u_{\text{sat}}(k) + q(k), \ \text{and} \ 
    q(\cdot) = \Delta_{\text{LTI}}(u_{\text{sat}}(\cdot)).
\end{align*}
\vspace{-0.3in}
\begin{figure}[h]
	\centering
	\includegraphics[width=0.37\textwidth]{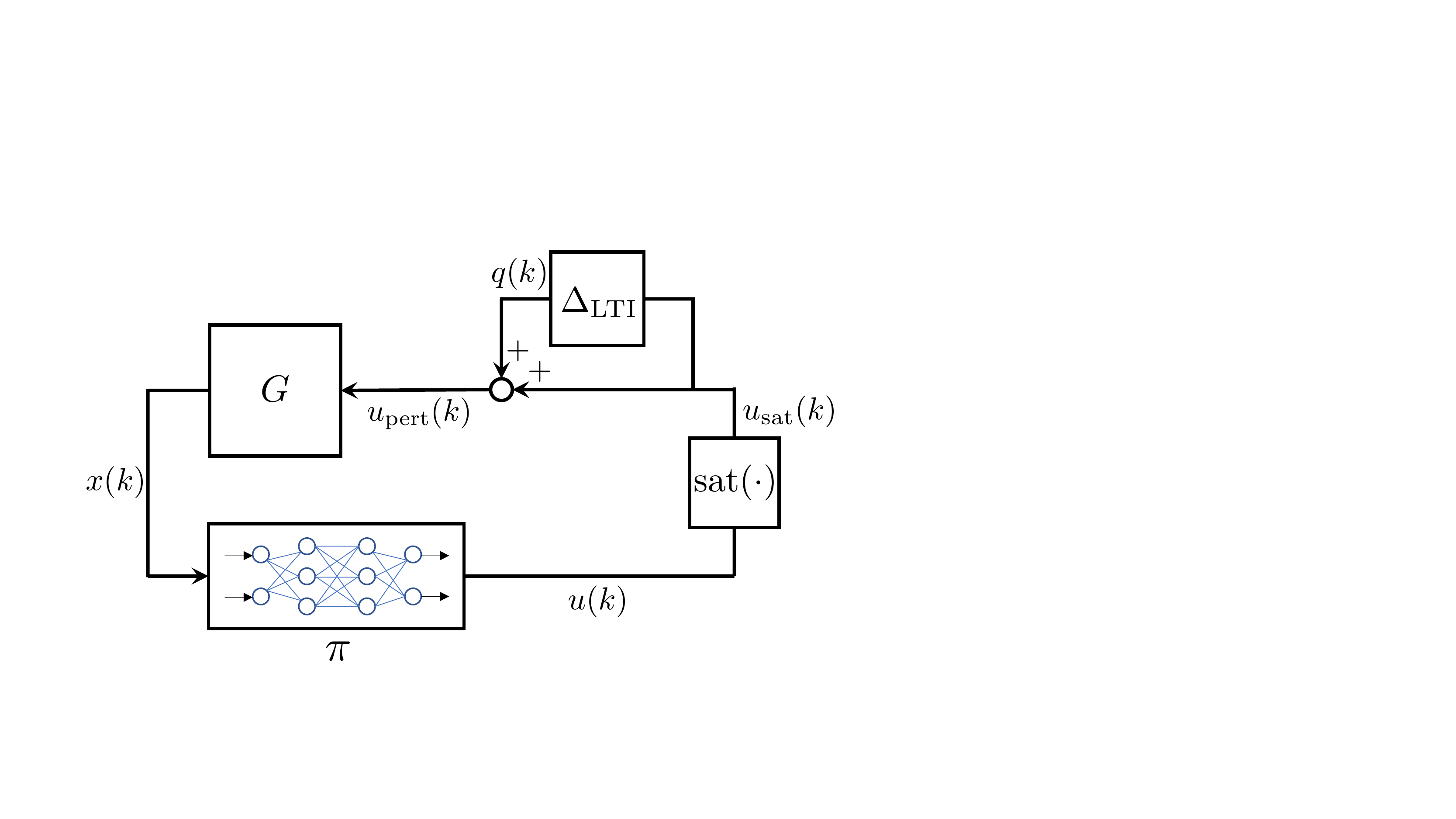}
	\caption{Uncertain vehicle system with actuator uncertainty}
	\label{fig:veh}    
\end{figure}
\vspace{-0.1in}

It is assumed that $v^1 \in [\underline{v}^1, \bar{v}^1]$, where
$\bar{v}^1  = - \underline{v}^1 = \delta_v \times 1_{32\times1}$ with $\delta_v = 0.6$. To show effectiveness of the off-by-one IQC, two experiments were carried out: one with only local sector IQC to describe $\phi$, and one with both local sector and off-by-one IQCs. The achieved trace$(P_x)$ for the two experiments are 4.4 and 2.9, respectively. Moreover, the achieved $\det(P_x^{-1})$ (proportional to the volume) for the experiments are $3.2\times 10^5$, and $1.1 \times 10^6$, respectively. Therefore, with the help of off-by-one IQC to sharpen the description of $\phi$, the second experiment achieves a larger ROA inner-approximation. It is also important to note that thanks to the off-by-one IQC, the SDP is able to tolerate looser local sector bounds. The largest value of $\delta_v$ such that the SDP is feasible is 0.67 for the first experiment, and 1.4 for the second experiment.

Figure~\ref{fig:vehROA} shows slices of
the ROA inner-approximation from the second experiment on the $e$--$\dot{e}$  and
$e_\theta$--$\dot{e}_\theta$ spaces. Specifically, these are
intersections of $\mathcal{E}(P_x, x_*)$ with the hyperplanes
$(e_\theta,\dot{e}_\theta) = (e_{\theta*}, \dot{e}_{\theta*})$ and
$(e,\dot{e}) = (e_*, \dot{e}_*)$, respectively, where
$x_* = [e_*, \dot{e}_*, e_{\theta*}, \dot{e}_{\theta*}]^\top$. The
slices are shown with blue ellipsoids. The boundary of the polytopic
set $\{x: \underline{v}^1 \leq v^1 \leq \bar{v}^1\}$ is shown with the
orange lines. The brown crosses represent the zero
equilibrium state $x_*$.
\vspace{-0.1in}
\begin{figure}[h]
  \centering
  \includegraphics[width=0.37\textwidth]{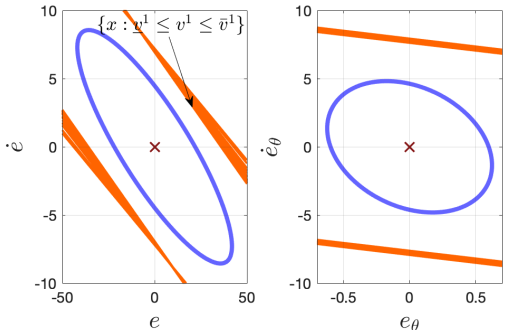}
  \vspace{-0.1in}
  \caption{ROA inner-approximation on the $e$--$\dot{e}$ and $e_\theta$--$\dot{e}_\theta$ spaces using both local sector and off-by-one IQCs with $\delta_v = 0.6$.}
  \label{fig:vehROA}    
\end{figure}



\vspace{-0.15in}
\bibliography{bibfile.bib} 
\bibliographystyle{IEEEtran}

\end{document}